\documentclass[11pt]{article}
\usepackage[T1]{fontenc}
\usepackage{lmodern}
\usepackage[margin=1in]{geometry}
\usepackage[utf8]{inputenc}
\usepackage[absolute]{textpos}
\usepackage[T1]{fontenc}
\usepackage{microtype}
\usepackage[all=normal,bibliography=tight]{savetrees}
\usepackage{listings}
  \usepackage{mathrsfs}

  \usepackage{amstext,amsfonts,amsthm,amsmath,amssymb}

  \usepackage{mathtools}
  
\usepackage{graphicx,tikz}
\usepackage{comment}
\usepackage{url}
\usepackage{xspace}
\usepackage{todonotes}
\usepackage{enumerate}

\def\cqedsymbol{\ifmmode$\lrcorner$\else{\unskip\nobreak\hfil
\penalty50\hskip1em\null\nobreak\hfil$\lrcorner$
\parfillskip=0pt\finalhyphendemerits=0\endgraf}\fi} 

\newcommand{\cqed}{\renewcommand{\qed}{\cqedsymbol}}

\newtheorem{lemma}{Lemma}
\newtheorem{theorem}[lemma]{Theorem}
\newtheorem{proposition}[lemma]{Proposition}

\newtheorem{claim}{Claim}
\theoremstyle{definition}

\newcommand{\pmca}{(PMC1)}
\newcommand{\pmcb}{(PMC2)}

\newcommand{\cc}{\mathsf{cc}}

\newcommand{\Om}{\Omega}
\newcommand{\pmc}{\Omega}

\newcommand{\Mod}{\mathsf{Mod}}
\newcommand{\Quo}{\mathsf{Quo}}

\newcommand{\MM}{\mathcal{M}}


\title{Covering minimal separators and potential maximal cliques in $P_t$-free graphs\thanks{
The early stage of this research was done while Andrzej Grzesik held a post-doc position of Warsaw Centre of Mathematics and Computer Science (WCMCS),
WCMCS supported a visit of Tereza Klimo\v{s}ov\'a in Warsaw,
Tereza Klimo\v sov\'a was supported by ANR project Stint under reference ANR-13-BS02-0007 and by the LABEX MILYON (ANR-10-LABX-0070) of Universit\'e de Lyon, within the program “Investissements
d’Avenir” (ANR-11-IDEX-0007) operated by the French National Research Agency (ANR).
Ma. Pilipczuk was supported by the Polish National Science Centre grant UMO-2013/09/B/ST6/03136, and
Mi.\ Pilipczuk was supported by the Foundation for Polish Science (FNP) via the START stipend programme.
Later stages of this research are parts of projects that have received funding from the European Research Council (ERC) under the European Union's Horizon 2020 research and innovation programme
Grant Agreements no.~648509 (A. Grzesik), no.~714704 (Ma. Pilipczuk), and no.~677651 (Mi.~Pilipczuk),
while 
Tereza Klimo\v sov\'a is supported by the grant no.~19-04113Y of the Czech
Science Foundation (GA\v{C}R) and the Center for Foundations of Modern Computer Science (Charles Univ. project UNCE/SCI/004).
}}

\author{ 
  Andrzej Grzesik\thanks{
    Faculty of Mathematics and Computer Science, Jagiellonian University, Krak\'ow, Poland, \texttt{andrzej.grzesik@uj.edu.pl}.
  }
  \and 
  Tereza Klimo\v{s}ov\'a\thanks{
    Department of Applied Mathematics, Charles University,  Prague, Czech Republic, \texttt{tereza@kam.mff.cuni.cz}.
  }
  \and 
  Marcin Pilipczuk\thanks{
    Institute of Informatics, University of Warsaw, Poland, \texttt{marcin.pilipczuk@mimuw.edu.pl}.
  }
  \and 
  Micha\l{} Pilipczuk\thanks{
    Institute of Informatics, University of Warsaw, Poland, \texttt{michal.pilipczuk@mimuw.edu.pl}.
  }
}

\date{}

\begin{document}

\maketitle

\begin{textblock}{20}(0, 11.0)
\includegraphics[width=40px]{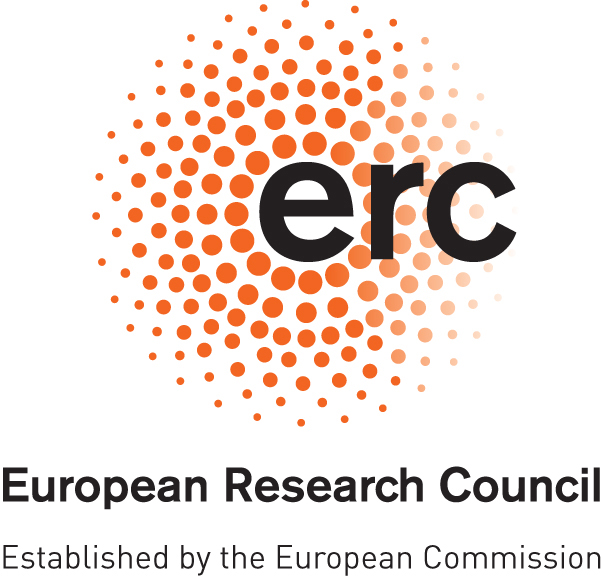}%
\end{textblock}
\begin{textblock}{20}(-0.25, 11.4)
\includegraphics[width=60px]{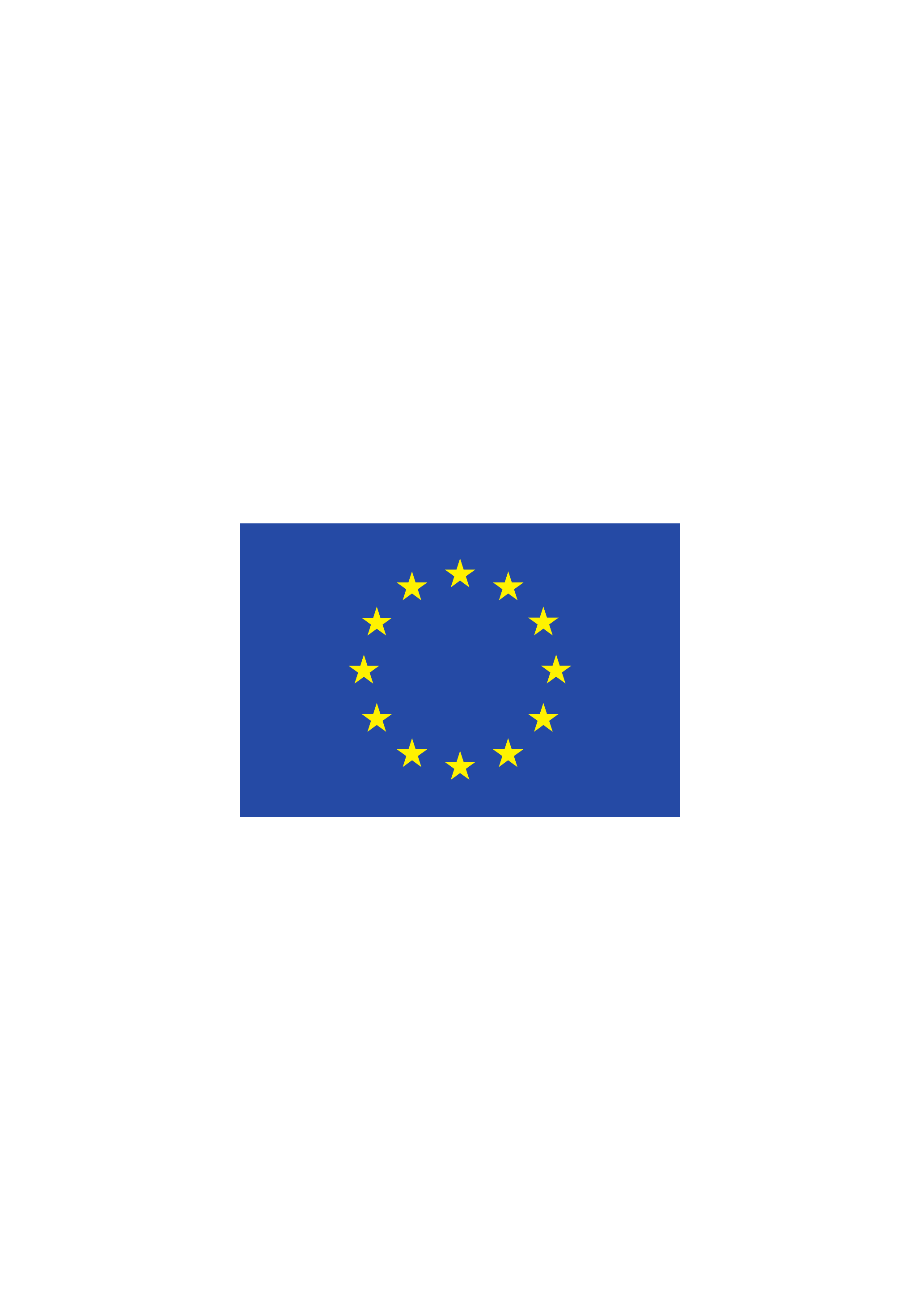}%
\end{textblock}

\begin{abstract}
A graph is called \emph{$P_t$-free} if it does not contain a $t$-vertex path as an induced subgraph. 
While $P_4$-free graphs are exactly \emph{cographs}, the structure of $P_t$-free graphs for $t \geq 5$ remains little undestood. 
On one hand, classic computational problems such as \textsc{Maximum Weight Independent Set} (\textsc{MWIS}) and \textsc{$3$-Coloring} are not known to be NP-hard
on $P_t$-free graphs for any fixed $t$.
On the other hand, despite significant effort, polynomial-time algorithms for \textsc{MWIS} in $P_6$-free graphs~[SODA 2019] and \textsc{$3$-Coloring} in $P_7$-free graphs~[Combinatorica 2018] have been found  only recently. 
In both cases, the algorithms rely on deep structural insights into the considered graph classes.

One of the main tools in the algorithms for \textsc{MWIS} in $P_5$-free graphs~[SODA 2014] and in $P_6$-free graphs~[SODA 2019] is the so-called \emph{Separator Covering Lemma} that asserts
that every minimal separator in the graph can be covered by the union of neighborhoods of a constant number of vertices. 
In this note we show that such a statement generalizes to $P_7$-free graphs and is false in $P_8$-free graphs. 
We also discuss analogues of such a statement for covering potential maximal cliques with unions of neighborhoods.
\end{abstract}



\section{Introduction}\label{sec:intro}
By $P_t$ we denote a path on $t$ vertices. A graph is \emph{$H$-free} if it does not contain 
an induced subgraph isomorphic to $H$. 

We are interested in classifying the complexity of fundamental computational problems,
such as \textsc{Maximum Weight Independent Set} (\textsc{MWIS}), 
\textsc{$k$-Coloring} for fixed or arbitrary $k$, or \textsc{Feedback Vertex Set},
on various hereditary graph classes, in particular on $H$-free graphs for small fixed graphs $H$.
As noted by Alekseev~\cite{Alekseev82}, \textsc{MWIS} is NP-hard on $H$-free graphs
unless every connected component of $H$ is a tree with at most three leaves. 
Similarly, \textsc{$3$-Coloring} is known to be NP-hard on $H$-free graphs
unless every connected component of $H$ is a path~\cite{GolovachPS14}. 
On the other hand, it would be consistent with our knowledge if both \textsc{MWIS} and \textsc{$3$-Coloring} were polynomial-time
solvable on $P_t$-free graphs for every fixed $t$, however this is currently unknown.
Positive results in this direction are limited only to small values of $t$, as explained next.

$P_4$-free graphs, known also as {\em{cographs}}, are well-understood; in particular,
they have bounded cliquewidth, which implies the existence polynomial-time algorithms 
for all the discussed problems. 
$P_5$-free graphs are much more mysterious and only in 2014, Lokshtanov, Vatshelle, and Villanger
proposed a novel technique that uses the framework of potential maximal cliques and proved
polynomial-time tractability of \textsc{MWIS} in this class~\cite{LokshtanovVV14}. 
This result was followed by a much more technically complex positive result
for $P_6$-free graphs~\cite{GrzesikKPP19}
and a recent algorithm for \textsc{Feedback Vertex Set} in $P_5$-free graphs~\cite{ACPRS20}.
For coloring with few colors,
the state-of-the-art are polynomial-time algorithms for \textsc{$3$-Coloring}
of $P_7$-free graphs~\cite{BonomoCMSSZ18} and
\textsc{$4$-Coloring} of $P_6$-free graphs~\cite{SpirklCZ19}.

In full generality, the so-called \emph{Gy\'{a}rf\'{a}s' path argument}
gives subexponential-time algorithms for both \textsc{MWIS} and \textsc{$3$-Coloring}
in $P_t$-free graphs for any fixed $t$~\cite{BacsoLMPTL19,Brause17,GroenlandORSSS19}.
Using the Gy\'{a}rf\'{a}s' path argument and the three-in-a-tree theorem~\cite{ChudnovskyS10},
it is possible to obtain a quasi-polynomial-time approximation scheme for \textsc{MWIS}
in $H$-free graphs whenever every connected component of $H$ is a tree with at most three leaves~\cite{ChudnovskyPPT20}. Note that this covers exactly the cases where NP-hardness is not known. 
The crucial property of $P_t$-free graphs that is used in all the works mentioned above is that, due to 
the aforementioned Gy\'{a}rf\'{a}s' path argument, every $P_t$-free graph admits
a balanced separator consisting of at most $t-1$ closed neighborhoods of a vertex.

The lack of NP-hardness results on one side, and shortage of generic algorithmic tools in $P_t$-free graphs on the other side, calls for a deeper understanding of the structure of $P_t$-free graphs for larger values of $t$. 
In this note we discuss one property that appeared important in the algorithms for \textsc{MWIS}
for $P_5$-free and $P_6$-free graphs~\cite{LokshtanovVV14,LokshtanovPL18,GrzesikKPP19}, namely
the possibiliy to cover a minimal separator with a small number of vertex neighborhoods.

Let $G$ be a graph. For a set $S \subseteq V(G)$, a connected component $A$ of $G-S$
is \emph{a full component to $S$} if $N_G(A) = S$. A set $S$ is a \emph{minimal separator}
if it admits at least two full components.
A set $F \subseteq \binom{V(G)}{2} \setminus E(G)$ is a \emph{chordal completion}
if $G+F \coloneqq (V(G), E(G) \cup F)$ is chordal (i.e., does not contain an induced subgraph 
isomorphic to a cycle on at least four vertices). 
A set $\pmc \subseteq V(G)$ is a \emph{potential maximal clique (PMC)} if there exists
an (inclusion-wise) minimal chordal completion $F$ of $G$ such that $\pmc$
is a maximal clique of $G+F$. 
Potential maximal cliques and minimal separators are tightly connected: for example, 
a graph is chordal if and only if every its minimal separator is a clique, and if 
$\pmc$ is a PMC in $G$, then for every connected component $D$ of $G-\pmc$
the set $N_G(D)$ is a minimal separator with $D$ being one of the full components. 

A framework of Bouchitt\'{e} and Todinca~\cite{BouchitteT01,BouchitteT02}, extended by Fomin, Todinca, and Villanger~\cite{FominTV15}, allows solving multiple computational problems (including \textsc{MWIS} and \textsc{Feedback Vertex Set}) on graph classes where graphs have only a polynomial number of PMCs.
While $P_5$-free graphs do not have this property, the crucial insight of the work
of Lokshtanov, Villanger and Vatshelle~\cite{LokshtanovVV14} allows modifying the framework
to work for $P_5$-free graphs and, with more effort, for $P_6$-free graphs~\cite{GrzesikKPP19}.

A simple, but crucial in~\cite{LokshtanovVV14}, insight about the structure of $P_5$-free graphs
is the following lemma.
\begin{lemma}[\cite{LokshtanovVV14}]
Let $G$ be a $P_5$-free graph, let $S$ be a minimal separator in $G$, and let $A$ and $B$
be two full components of $S$.
Then for every $a \in A$ and $b \in B$ it holds that $S \subseteq N_G(a) \cup N_G(b)$.
\end{lemma}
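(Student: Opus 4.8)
The plan is to proceed by contradiction. Suppose there are $a \in A$, $b \in B$ and $s \in S$ with $s \notin N_G(a) \cup N_G(b)$; the goal is then to produce an induced $P_5$ in $G$.

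First I would exploit that $A$ and $B$ are full components to $S$, i.e.\ $N_G(A) = N_G(B) = S$. In particular $s$ has a neighbour in $A$ and a neighbour in $B$, and since $A$ and $B$ are connected, $a$ and $b$ lie in the same connected component of the induced subgraph $G' \coloneqq G[A \cup \{s\} \cup B]$. Let $R = (a = z_0, z_1, \dots, z_m = b)$ be a shortest $a$--$b$ path in $G'$.

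Two observations then finish the argument. Since $A$ and $B$ are distinct connected components of $G - S$, there is no edge between $A$ and $B$, so $a$ and $b$ lie in different components of $G[A \cup B]$; hence $R$ is forced to pass through the only vertex of $S$ present in $G'$, namely $s$, say $z_i = s$. Moreover, being a shortest path, $R$ is an induced path of $G$. As $s = z_i$ is non-adjacent to $a = z_0$ we get $i \ge 2$, and symmetrically $m - i \ge 2$, so $R$ has at least five vertices. Consequently the five consecutive vertices $z_{i-2}, z_{i-1}, z_i, z_{i+1}, z_{i+2}$ induce a $P_5$ in $G$, contradicting $P_5$-freeness.

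I do not expect a serious obstacle here; the only points requiring care are checking that the chosen path really must route through the single available separator vertex $s$, and that each of its two halves spans at least three vertices — which is precisely where the assumption $s \notin N_G(a) \cup N_G(b)$ enters.
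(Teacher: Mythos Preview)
The paper does not include its own proof of this lemma; it is cited from~\cite{LokshtanovVV14} and stated without argument. Your proof is correct and is essentially the standard argument: restricting to $G[A\cup\{s\}\cup B]$ forces any $a$--$b$ path through the single available separator vertex $s$, and the non-adjacencies $sa,sb\notin E(G)$ push at least two internal vertices onto each side, yielding an induced $P_5$. There is nothing to compare against in the present paper, and no gap in your reasoning.
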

The above statement is per se false in $P_6$-free graphs, but the following variant
is true and turned out to be pivotal in~\cite{GrzesikKPP19}:
\begin{lemma}[\cite{GrzesikKPP19}, Lemma~20 in the arXiv version]
Let $G$ be a $P_6$-free graph, let $S$ be a minimal separator in $G$, and let $A$ and $B$
be two full components of $S$.
Then there exist nonempty sets $A' \subseteq A$ and $B' \subseteq B$ such that
$|A'| \leq 3$, $|B'| \leq 3$, and $S \subseteq N_G(A) \cup N_G(B)$.
\end{lemma}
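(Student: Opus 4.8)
The plan is to fix arbitrary vertices $a\in A$ and $b\in B$ and to build $A'$ and $B'$ so that $a\in A'$ and $b\in B'$, which in particular makes both sets nonempty. If already $S\subseteq N_G(a)\cup N_G(b)$ we are done with $A'=\{a\}$, $B'=\{b\}$, so set $T:=S\setminus(N_G(a)\cup N_G(b))$ and assume $T\neq\emptyset$. First I would record, using $P_6$-freeness, that every $s\in T$ lies at distance exactly $2$ from $a$ inside $G[A\cup\{s\}]$ and at distance exactly $2$ from $b$ inside $G[B\cup\{s\}]$. Indeed, since $N_G(A)=N_G(B)=S$ and $G[A],G[B]$ are connected, both these distances are finite and at least $2$; gluing a shortest $a$--$s$ path with a shortest $b$--$s$ path at $s$ yields an induced path of $G$ with endpoints $a$ and $b$ (its interior vertices lie in the distinct components $A$ and $B$ of $G-S$, hence are pairwise nonadjacent, and $s$ is adjacent to neither $a$ nor $b$), on one more vertex than the sum of the two distances, so this sum is at most $4$. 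In particular, writing $A_1:=N_G(a)\cap A$ and $B_1:=N_G(b)\cap B$, every $s\in T$ has a neighbour in $A_1$ and a neighbour in $B_1$.

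Next I would prove the structural claim that for every $s\in T$ the set $N_G(s)\cap A_1$ is a nonempty union of connected components of the complement of $G[A_1]$ (call these the \emph{co-components} of $A_1$), and symmetrically for $B_1$. Indeed, if $x\in N_G(s)\cap A_1$, $y\in A_1$, $xy\notin E(G)$ but $y\notin N_G(s)$, then choosing any $q\in N_G(s)\cap B_1$ the six vertices $b,q,s,x,a,y$ induce a $P_6$: the only edges among them are $bq$, $qs$, $sx$, $xa$, $ay$, while every remaining pair is nonadjacent either because its two vertices lie in distinct components of $G-S$, or because the pair is $\{s,a\}$ or $\{s,b\}$ ($s\in T$), or because it is $\{x,y\}$ or $\{s,y\}$. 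Letting $C_1,\dots,C_r$ be the co-components of $A_1$ and $D_1,\dots,D_{r'}$ those of $B_1$, this associates to each $s\in T$ nonempty index sets $I_s\subseteq\{1,\dots,r\}$ and $J_s\subseteq\{1,\dots,r'\}$ with $N_G(s)\cap A_1=\bigcup_{i\in I_s}C_i$ and $N_G(s)\cap B_1=\bigcup_{j\in J_s}D_j$. Since a single vertex of $C_i$ dominates in $T$ precisely $\{s\in T: i\in I_s\}$, it now suffices to find index sets $P,Q$ with $|P|\le 2$ and $|Q|\le 2$ such that every $s\in T$ satisfies $P\cap I_s\neq\emptyset$ or $Q\cap J_s\neq\emptyset$; then $A'=\{a\}$ together with one vertex of each $C_i$ for $i\in P$, and $B'=\{b\}$ together with one vertex of each $D_j$ for $j\in Q$, do the job.

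For the combinatorial core, if $r\le 2$ take $P=\{1,\dots,r\}$ (which meets every nonempty $I_s$), and likewise if $r'\le 2$; so assume $r,r'\ge 3$. The crucial further consequence of $P_6$-freeness I would extract is that no three vertices $t_1,t_2,t_3\in T$ are pairwise \emph{cross-disjoint}, meaning $I_{t_p}\cap I_{t_q}=\emptyset$ and $J_{t_p}\cap J_{t_q}=\emptyset$ for all $p\neq q$: picking representatives $x_p\in N_G(t_p)\cap A_1$ and $q_p\in N_G(t_p)\cap B_1$, cross-disjointness forces $x_p,x_q$ into distinct co-components (so $x_px_q\in E(G)$), likewise $q_pq_q\in E(G)$, and $x_p\notin N_G(t_q)$, $q_p\notin N_G(t_q)$ whenever $p\neq q$; one first checks that $t_p,t_q$ must be adjacent in $G$ (otherwise $b,q_p,t_p,x_p,x_q,t_q$ induce a $P_6$) and then that $b,q_1,t_1,t_2,x_2,x_3$ induce a $P_6$, a contradiction. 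When all $I_s$ and all $J_s$ are singletons, this constraint already produces $P$ and $Q$: form the bipartite graph on $\{1,\dots,r\}\sqcup\{1,\dots,r'\}$ whose edges are the pairs $(i_s,j_s)$ with $I_s=\{i_s\}$, $J_s=\{j_s\}$ realized by some $s\in T$; the constraint says it has no matching of size $3$, so by K\"onig's theorem it has a vertex cover of size at most $2$, which is exactly the pair $(P,Q)$ (two rows, two columns, or one of each).

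The hard part will be the last step in full generality, when the index sets are not singletons, since replacing each $I_s$ by a single one of its elements destroys the cross-disjointness constraint. My plan there is to exploit that a vertex $s$ with a large $I_s$ (or $J_s$) is dominated by almost every admissible choice of $P$ (resp.\ $Q$), so that only vertices with small index sets are genuinely constraining, and to combine this with an extremal choice of the initial pair $(a,b)$ — minimizing, say, the number of co-components of $A_1$, or the size of $T$ — together with a handful of additional $P_6$-free path arguments relating co-components of $A_1$ to co-components of $B_1$, so as to fall back onto a K\"onig-type bound. Making this reduction airtight, and in particular taming the interaction between the ``join'' structure inside $A_1$ and $B_1$ and the vertices of $T$, is where the real work of the proof lies.
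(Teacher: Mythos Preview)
Your proposal has a genuine gap: you explicitly leave the general case --- where the index sets $I_s$ and $J_s$ are not singletons --- unfinished, and the suggested remedy (an extremal choice of $(a,b)$ together with unspecified further $P_6$-arguments) is only a plan, not an argument. The difficulty is real: the ``no three pairwise cross-disjoint'' constraint you extract does \emph{not} by itself imply the existence of a $(2{+}2)$-cover for arbitrary set systems $(I_s,J_s)_{s\in T}$, and nothing in the write-up indicates which additional structural fact would force it. So as it stands, the core combinatorial step is missing rather than merely unpolished.

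For comparison, the paper does not prove this lemma here (it is cited from~\cite{GrzesikKPP19}), but the machinery used there and in Section~\ref{sec:sep} for the $P_7$-free analogue is quite different from yours and avoids exactly the obstacle you hit. Instead of fixing an arbitrary $a\in A$, $b\in B$ and analysing co-components of $N(a)\cap A$ and $N(b)\cap B$, one applies the Neighborhood Decomposition Lemma (Lemma~\ref{lem:nei-decomp}) to the \emph{whole} of $A$ and of $B$: pick $p_i,q_i$ in adjacent maximal strong modules of $G[A_i]$ and classify each $s\in S$ as type~(a), (b), or~(c) on each side. The residual ``type~(c) on both sides'' vertices --- whose neighbourhoods in each $A_i$ are unions of maximal strong modules --- are then handled by the Bi-ranking Lemma (Lemma~\ref{lem:biranking}): comparability of neighbourhoods under inclusion gives a single minimum element, and one neighbour of it on each side suffices. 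This sidesteps any K\"onig-type covering argument and, crucially, never needs a reduction to the singleton case. Your co-component claim is morally the type-(c) conclusion specialised to a single vertex's neighbourhood, but by anchoring at an arbitrary $a$ rather than at the module structure of the full component you lose the inclusion order among the $N(s)\cap A$ that makes the bi-ranking step go through in one line.
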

That is, every minimal separator in a $P_6$-free graph has a dominating set of size at most $6$,
     contained in the union of two full components of this separator.

In Section~\ref{sec:sep} we extend the result to $P_7$-free graphs as follows.
\begin{theorem}\label{thm:sepcov}
Let $G$ be a $P_7$-free graph and let $S$ be a minimal separator in $G$.
Then there exists a set $S'\subseteq V(G)$ of size at most $22$ such that $S \subseteq N_G[S']$.
\end{theorem}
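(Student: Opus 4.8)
The plan is to fix two full components $A$ and $B$ of $S$ (which exist because $S$ is a minimal separator), together with arbitrary base vertices $a \in A$ and $b \in B$, and to show that a constant number of vertices clustered around $a$, around $b$, and a constant number more suffice. Every $s \in S$ with $s \in N_G(a) \cup N_G(b)$ is already covered by $\{a,b\}$, so we may restrict attention to $S_1 \coloneqq S \setminus (N_G(a) \cup N_G(b))$. For $s \in S_1$, since $A$ is a full component we have $N_G(s) \cap A \neq \emptyset$; let $d_A(s)$ be the distance in $G[A]$ from $a$ to the set $N_G(s) \cap A$, and define $d_B(s)$ symmetrically. Both are finite and, as $s \notin N_G(a) \cup N_G(b)$, at least $1$. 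Taking a shortest path in $G[A]$ from $a$ to a vertex $a_s \in N_G(s) \cap A$ realizing $d_A(s)$, following it by $s$, and then by a shortest path in $G[B]$ from $s$ to $b$, one obtains an \emph{induced} path on $d_A(s) + d_B(s) + 2$ vertices (the two halves lie in the mutually non-adjacent sides $A$ and $B$, each half is an induced shortest path, and by minimality of $d_A(s)$ and $d_B(s)$ the vertex $s$ sees only the two chosen endpoints). As $G$ is $P_7$-free this path has at most $6$ vertices, so $d_A(s) + d_B(s) \le 4$ and in particular $\min(d_A(s), d_B(s)) \le 2$. Hence $S_1 = T_A \cup T_B$, where $T_A \coloneqq \{s \in S_1 : d_A(s) \le 2\}$ and $T_B \coloneqq \{s \in S_1 : d_B(s) \le 2\}$.

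By symmetry it is enough to cover $T_A$ by a constant number of closed neighborhoods, say $T_A \subseteq N_G[Z_A]$ with $|Z_A|$ bounded; together with the symmetric bound for $T_B$ and the two vertices $a,b$ this gives the desired $|S'| \le 22$. To build $Z_A$ I would proceed greedily. For each $s \in T_A$ fix, once and for all, a shortest $G[A]$-path $Q_s$ from $a$ to a closest neighbour $a_s$ of $s$ in $A$ (so $Q_s$ has at most three vertices and $s$ is adjacent to exactly its far endpoint $a_s$), together with a vertex $b_s \in N_G(s) \cap B$. Now repeatedly pick a vertex $s$ of $T_A$ not yet covered by the current $N_G[Z_A]$, declare it a \emph{representative}, and add the at most four vertices $V(Q_s) \cup \{s, b_s\}$ (minus $a$, which we already have) to $Z_A$. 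It remains to bound the number of representatives.

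The key claim is that a long chain of representatives would produce a forbidden induced $P_7$. If $s_1, \dots, s_k$ are the representatives in the order chosen, then when $s_j$ was selected it was adjacent to none of $s_i$, to no vertex of $Q_{s_i}$, and to $b_{s_i}$, for any $i < j$. Gluing the (short, induced) path $s_i$--$Q_{s_i}$--$a$ to $a$--$Q_{s_j}$--$s_j$ at $a$, and lengthening it through $B$ at one or both ends via $b_{s_i}$ and $b_{s_j}$, yields a path on at least $7$ vertices; this is exactly the point where the $P_7$-bound bites (two paths to $a$ of at most four vertices each, glued at $a$ and possibly extended once or twice through $B$). The \textbf{main obstacle} is that such a glued path is induced only after ruling out the ``backward'' chords that the greedy choice does \emph{not} control --- for instance $s_i$ adjacent to $Q_{s_j}$ or to $b_{s_j}$, or $b_{s_i}$ adjacent to $b_{s_j}$. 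Handling these calls for a more careful, multi-pass selection: the sub-case $d_A(s) = 2$ (where two non-interfering representatives already give a $7$-vertex induced path inside $A \cup S$, using nothing from $B$) should be treated separately from $d_A(s) = 1$ (where the $B$-extension at both ends is essential, and where one must in addition fix a bounded amount of structure on the $B$-side --- e.g.\ a shortest path in $G[B]$ from $b$ toward $\bigcup_s \{b_s\}$ --- to also neutralise the $\{b_{s_i}, b_{s_j}\}$ chords), after which a Ramsey/sunflower-type count bounds how many pairwise-interfering short paths can coexist without creating an induced $P_7$. Carrying out this bookkeeping, and tracking exactly how many extra vertices each neutralised chord costs, is what pins the constant down to $22$.
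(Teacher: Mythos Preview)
Your distance bound is fine (the induced path through $s$ actually has $d_A(s)+d_B(s)+3$ vertices, not $+2$, so in fact $d_A(s)+d_B(s)\le 3$). The genuine gap is the greedy step: the ``backward'' chords you flag are not a bookkeeping nuisance but a real obstruction, and no Ramsey/sunflower argument saves it. Here is a $P_7$-free (indeed $P_6$-free) instance on which your greedy selects $n$ representatives. Let $A=\{a,a_1,\dots,a_n\}$ and $B=\{b,b_1,\dots,b_n\}$ be cliques, $S=\{s_1,\dots,s_n\}$ an independent set, with $s_ia_j,s_ib_j\in E(G)$ exactly when $j\ge i$ (so $a,b$ have no neighbours in $S$). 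One checks that no induced path meets $S$ in three vertices, so $G$ is $P_6$-free; $S$ is a minimal separator with full sides $A$ and $B$. Every $s_i$ has $d_A(s_i)=d_B(s_i)=1$, and $a_i,b_i$ are valid choices for $a_{s_i},b_{s_i}$. With these fixed and the greedy processing $s_1,s_2,\dots$ in order, $s_{k+1}$ is non-adjacent to every vertex of $\{a,b\}\cup\{a_i,s_i,b_i:i\le k\}$, so the greedy runs for all $n$ rounds. In the glued $7$-vertex walk $b_i-s_i-a_i-a-a_j-s_j-b_j$ (for $i<j$) every chord you feared is present --- $a_ia_j$, $s_ia_j$, $s_ib_j$, $b_ib_j$ --- and since the neighbourhoods $N(s_i)\cap A$ form a chain under inclusion, there is nothing for a Ramsey argument to bite on.

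This chain structure is exactly what the paper exploits rather than fights. Its proof classifies each $s\in S$ with respect to each side via a Neighbourhood Decomposition Lemma (either $s$ is hit by two fixed vertices $p_i,q_i$, or $s$ starts an induced $P_4$ into $A_i$, or $\Quo(A_i)$ is a clique and $N(s)\cap A_i$ is a union of its maximal strong modules). The hard case --- the last type on both sides --- is handled by comparing the sets $N(s)\cap A_i$ under inclusion and applying a Bi-ranking Lemma: in any family with no ``butterfly'' (a pair incomparable on both sides) there is an element minimal enough that one neighbour per side dominates the whole family. In the example above every pair is comparable on both sides, and the lemma immediately outputs $a_n$ (adjacent to all of $S$) --- precisely the good choice of $a_s$ that an arbitrary-choice greedy has no mechanism to find. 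What your argument is missing is not chord bookkeeping but a device for selecting representatives by inclusion-minimality of their neighbourhoods.
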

Theorem~\ref{thm:sepcov} directly generalizes a statement proved by Lokshtanov et al.~\cite[Theorem~1.3]{LokshtanovPL18},
  which was an important ingredient of their quasi-polynomial-time algorithm for \textsc{MWIS} in $P_6$-free
  graphs.

Section~\ref{sec:ex} discusses a modified example from~\cite{LokshtanovPL18}
that witnesses that no statement analogous to Theorem~\ref{thm:sepcov} can be true in $P_8$-free graphs.
Furthermore, observe that in the statements for $P_5$-free and $P_6$-free graphs 
the dominating set for the separator is guaranteed to be contained in two full components of the separator. This is no longer the case in Theorem~\ref{thm:sepcov} for a reason: 
in Section~\ref{sec:ex} we show examples of $P_7$-free graphs where any constant-size dominating
set of a minimal separator needs to contain a vertex from the said separator.

The intuition behind the framework of PMCs, particularly visible in 
the quasi-polynomial-time algorithm for \textsc{MWIS} in $P_6$-free graphs~\cite{LokshtanovPL18},
is that potential maximal cliques can serve as balanced separators of a graph.
Here, $X \subseteq V(G)$ is a balanced separator of $G$ if every connected component of $G-X$
has at most $|V(G)|/2$ vertices. 
The quasi-polynomial-time algorithm of~\cite{LokshtanovPL18} tried to recursively
split the graph into significantly smaller pieces by branching and deleting
as large as possible pieces of such a PMC. 
Motivated by this intuition, in Section~\ref{sec:pmc} we generalize Theorem~\ref{thm:sepcov} to 
dominating potential maximal cliques:
\begin{theorem}\label{thm:pmccov}
Let $G$ be a $P_7$-free graph and let $\pmc$ be a potential maximal clique in $G$.
Then there exists a set $\pmc' \subseteq V(G)$ of size at most $68$ such that $\pmc \subseteq N_G[\pmc']$.
\end{theorem}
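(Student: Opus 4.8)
The plan is to deduce Theorem~\ref{thm:pmccov} from Theorem~\ref{thm:sepcov} by showing that a potential maximal clique in a $P_7$-free graph is covered by the neighborhoods of boundedly many of its associated minimal separators, together with a bounded number of ``exceptional'' vertices. Throughout I may assume $G$ is connected, since a PMC lies inside a single connected component. First I would dispose of a degenerate case. Recall (Bouchitt\'e--Todinca~\cite{BouchitteT01}) that $\Omega$ being a PMC means it is \emph{cliquish} — for every pair of non-adjacent $u,v\in\Omega$ there is a component $D$ of $G-\Omega$ with $u,v\in N_G(D)$ — and admits no full component. If some $v\in\Omega$ satisfies $N_G(v)\subseteq\Omega$, then cliquishness forces $v$ to be adjacent to every other vertex of $\Omega$ (a component witnessing non-adjacency of $v$ to some $u$ would have to contain a neighbor of $v$ outside $\Omega$), so $\Omega\subseteq N_G[v]$ and we are done with $|\Omega'|=1$. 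Hence I may assume every vertex of $\Omega$ has a neighbor outside $\Omega$; writing $D_1,\dots,D_k$ for the connected components of $G-\Omega$, this is equivalent to $\Omega=\bigcup_{i=1}^{k}N_G(D_i)$, where each $N_G(D_i)$ is a minimal separator of $G$ having $D_i$ as a full component.

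The core of the proof is the claim that \emph{there exist at most three components $D_{i_1},D_{i_2},D_{i_3}$ of $G-\Omega$ and a set $Z\subseteq\Omega$ with $|Z|\le 2$ such that $\Omega\subseteq N_G(D_{i_1})\cup N_G(D_{i_2})\cup N_G(D_{i_3})\cup Z$.} Granting this, Theorem~\ref{thm:pmccov} follows at once: apply Theorem~\ref{thm:sepcov} to each of the three minimal separators $N_G(D_{i_j})$ to obtain sets $S'_j$ with $|S'_j|\le 22$ and $N_G(D_{i_j})\subseteq N_G[S'_j]$, and put $\Omega'=S'_1\cup S'_2\cup S'_3\cup Z$; then $|\Omega'|\le 3\cdot 22+2=68$ and $\Omega\subseteq N_G[\Omega']$.

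To prove the claim I would argue by contradiction through an extremal/greedy extraction. Assuming that no three components plus two exceptional vertices suffice, one greedily selects components $D^1,D^2,D^3,\dots$ of $G-\Omega$ and ``private'' witnesses $u_1,u_2,u_3,\dots\in\Omega$ such that $u_j$ has a neighbor $a_j\in D^j$ but no neighbor in $D^1\cup\dots\cup D^{j-1}$; the slack afforded by the two exceptional vertices, together with cliquishness, lets one additionally prune small obstructions so that consecutive witnesses can be linked cheaply. Cliquishness then connects $u_j$ and $u_{j+1}$: either they are adjacent, or there is a component $D$ with $u_j,u_{j+1}\in N_G(D)$ and hence an induced path from $u_j$ to $u_{j+1}$ through $D$. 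Splicing such segments together with the edges $u_ja_j$, one tries to assemble an induced path on at least seven vertices, contradicting $P_7$-freeness. Here the non-adjacency of distinct components of $G-\Omega$ removes all ``horizontal'' chords between the interiors of different segments, and the privateness of the $u_j$'s removes chords of the form $u_j a_\ell$ with $\ell<j$. I would make the greedy choices with some extremal care — e.g.\ always preferring a component whose neighborhood is inclusion-maximal, or taking each witness as deep as possible — so that the remaining potential chords can either be excluded or, when they cannot, exploited to shorten the configuration (which only sharpens the contradiction).

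The main obstacle is exactly this last point: turning the spliced walk into a genuinely induced path. Component-disjointness and privateness handle the ``automatic'' non-adjacencies, but adjacencies between a witness $u_j$ and a vertex of a \emph{later}-chosen component, or between non-consecutive witnesses, must be controlled by a careful choice of witnesses and of the connecting components, with a case analysis on what happens when such adjacencies are unavoidable. Getting this bookkeeping right is what fixes the number of separators at three and the number of exceptional vertices at two, and hence yields the stated bound $68=3\cdot 22+2$.
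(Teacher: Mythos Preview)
Your reduction is exactly the paper's: prove a structural lemma that $\Omega$ is covered by the neighborhoods of at most three components of $G-\Omega$ together with (the closed neighborhood of) at most two vertices of $\Omega$, then invoke Theorem~\ref{thm:sepcov} on each of the three minimal separators $N_G(D_{i_j})$ to get $3\cdot 22+2=68$. One small wording point: you state the claim with $Z$ rather than $N_G[Z]$, i.e.\ that $\Omega\setminus\bigl(N(D_{i_1})\cup N(D_{i_2})\cup N(D_{i_3})\bigr)$ has at most two elements. That is strictly stronger than what the paper proves (and than what you need); the paper's Lemma~\ref{lem:pmccov2} only gives $\Omega\subseteq N[\Omega']\cup\bigcup_{D\in\mathcal D'}N(D)$ with $|\Omega'|\le 2$, and your deduction of the theorem works verbatim with this weaker form.

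The genuine gap is in your proof of the key claim. The greedy extraction you sketch --- pick components $D^1,D^2,\dots$ with private witnesses $u_j$, link consecutive witnesses via cliquishness, and splice to a $P_7$ --- is not carried out, and you explicitly flag the chord-control as the ``main obstacle''. In particular, you have no mechanism ruling out adjacencies between a witness $u_j$ and internal vertices of a later connecting path, and the ``slack afforded by the two exceptional vertices'' is not made precise. The paper sidesteps all of this with a direct, non-greedy argument: take a minimal family $\mathcal D$ of components whose neighborhoods cover all nonedges of $\Omega$; pick any $D\in\mathcal D$ and, by minimality, a nonedge $uv$ covered only by $D$; if some other $D_v\in\mathcal D$ sees $v$ and some $D_u\in\mathcal D$ sees $u$, then $\{D,D_u,D_v\}$ and $\{u,v\}$ work, because any $x\in\Omega$ avoiding $N[u]\cup N[v]\cup N(D)\cup N(D_u)\cup N(D_v)$ yields an induced path $y_u\,$--$\,u\,$--$\,P_u\,$--$\,x\,$--$\,P_v\,$--$\,v\,$--$\,y_v$ of length at least seven (here $P_u,P_v$ are shortest paths through components covering $xu,xv$, and $y_u\in D_u$, $y_v\in D_v$). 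The privateness of $uv$ to $D$ is exactly what forces $D_u\ne D_v$ and kills the dangerous chords. This is both simpler than the greedy scheme and immediately gives the correct form of the lemma; I would replace your sketch with this argument.
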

Since every minimal separator is a subset of some potential maximal clique in a graph,
Theorem~\ref{thm:pmccov} generalizes Theorem~\ref{thm:sepcov}. For the same reason, our examples for $P_8$-free graphs also prohibit extending Theorem~\ref{thm:pmccov}
to $P_8$-free graphs.

\section{Preliminaries}\label{sec:prelims}

For basic graph notation, we follow the arXiv version of~\cite{GrzesikKPP19}.
We outline here only nonstandard notation that is not presented in the introduction.

For a set $X \subseteq V(G)$, by $\cc(G-X)$ we denote the family of connected components of $G-X$.
A set $A$ is {\em{complete}} to a set $B$ if every vertex of $A$ is adjacent to every vertex of $B$.

\paragraph{Potential maximal cliques.}
A set $\pmc \subseteq V(G)$ is a \emph{potential maximal clique} (PMC)
  if:
\begin{description}
\item[\pmca]\label{pmca} none of the connected components of $\cc(G-\Om)$ is full to $\Om$; and
\item[\pmcb]\label{pmcb} whenever $uv$ is a non-edge with $\{u,v\}\subseteq \Om$, then there is a component $D\in \cc(G-\Om)$ such that $\{u,v\}\subseteq N(D)$.
\end{description}
In the second condition, we will say that the component $D$ {\em{covers}} the non-edge $uv$.
As announced in the introduction, we have the following.
\begin{proposition}[Theorem 3.15 of~\cite{BouchitteT01}]
For a graph $G$, a vertex subset $\Omega\subseteq V(G)$ is a PMC if and only if there exists a minimal chordal completion $F$ of $G$ such that $\Omega$ is a maximal clique in $G+F$.
\end{proposition}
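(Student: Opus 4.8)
The statement is an equivalence, so I will prove the two implications separately. Throughout write $H \coloneqq G+F$ for a chordal completion, call $F$ \emph{minimal} when no proper subset of $F$ is a chordal completion (equivalently, for every $uv\in F$ the graph $H-uv$ contains an induced cycle on at least four vertices), and for $D\in\cc(G-\Omega)$ put $S_D \coloneqq N_G(D)\subseteq\Omega$. I will freely use the standard structure theory of chordal graphs: every chordal graph admits a \emph{clique tree} whose nodes are its maximal cliques and in which the intersection of two adjacent cliques is a minimal separator; and the minimal separators of a minimal chordal completion $H$ of $G$ are minimal separators of $G$ as well.

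$(\Leftarrow)$ Assume $\Omega$ is a maximal clique of a minimal completion $H=G+F$; I verify \pmca\ and \pmcb. For \pmca, fix a clique tree $T$ of $H$ and root it at $\Omega$. Deleting the node $\Omega$ splits $T$ into subtrees, and by the running-intersection property the subtrees meet $V(G)\setminus\Omega$ in sets $V_1,\dots,V_r$ that are pairwise non-adjacent in $H$, while each $V_i$ sends edges into $\Omega$ only inside a minimal separator $\Omega_i\subsetneq\Omega$ (the intersection of $\Omega$ with the neighbouring clique). Any $D\in\cc(G-\Omega)$ is connected in $H$, hence lies in one $V_i$, so $S_D \subseteq N_H(D)\cap\Omega \subseteq \Omega_i \subsetneq\Omega$; thus no component is full and \pmca\ holds. (This part used only that $\Omega$ is a maximal clique of \emph{some} completion.) For \pmcb, let $uv$ be a non-edge of $G$ with $u,v\in\Omega$; since $\Omega$ is a clique of $H$ we have $uv\in F$, and minimality yields an induced cycle $C$ of length at least four in $H-uv$. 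Every such cycle passes through both $u$ and $v$ with them non-consecutive, for otherwise it would survive in $H$. As any two vertices of $\Omega$ are adjacent in $H-uv$ unless they are $u,v$, and on an induced cycle adjacent means consecutive, $V(C)\cap\Omega$ can only be $\{u,v\}$ together with at most one common neighbour of $u,v$ on one arc. Hence the other arc is an induced path from $u$ to $v$ with interior avoiding $\Omega$, so lying in $\cc(G-\Omega)$; a careful choice of $C$ confines this interior to a single component $D$ and makes the end-edges genuine $G$-edges, giving $u,v\in S_D$, i.e.\ $D$ covers $uv$.

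$(\Rightarrow)$ Assume \pmca\ and \pmcb; I construct a minimal completion with $\Omega$ a maximal clique. Build $H$ from $G$ by (i) turning $\Omega$ into a clique, and (ii) for every $D\in\cc(G-\Omega)$ independently taking a minimal triangulation of the \emph{block} $G[D\cup S_D]$ with $S_D$ completed into a clique, and adding its fill edges. The blocks attach to one another only along the subcliques $S_D\subseteq\Omega$, so $H$ is a clique-sum of chordal graphs along cliques and is therefore chordal, i.e.\ a chordal completion of $G$. The clique $\Omega$ is \emph{maximal}: any $x\notin\Omega$ lies in some block $D$, where $N_H(x)\cap\Omega\subseteq S_D\subsetneq\Omega$ by \pmca, so $x$ misses some vertex of $\Omega$. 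For \emph{minimality}, a fill edge inside a single block is non-removable by minimality of that block's triangulation (its certifying hole stays induced in $H$), while a fill edge $uv$ with $u,v\in\Omega$ is, by \pmcb, covered by some component $D$, and since $D$ is connected an induced $D$-path from a $G$-neighbour of $u$ to a $G$-neighbour of $v$ closes into an induced cycle through $u,v$ in $H-uv$, witnessing non-removability.

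Both halves share the same crux, which is where I expect the real work: relating a \emph{single} removed fill edge to one component of $G-\Omega$. In $(\Leftarrow)$ I must argue that the hole forced by minimality can be taken so that the interior of one arc is not only outside $\Omega$ but also connected in $G$ — not merely in $H$, where spurious fill edges could shortcut it — and attaches to $u$ and $v$ through honest $G$-edges. In $(\Rightarrow)$ I must ensure, while assembling the block triangulations, that the induced $G$-path supplied by \pmcb\ acquires no chords from edges added inside the block, so that it genuinely certifies non-removability. I expect to discharge both points by invoking the standard correspondence between minimal completions and maximal sets of pairwise non-crossing minimal separators of $G$, under which each completed separator is a minimal separator of $G$ with full components in $G$; this translates ``fill edge'' into ``pair of vertices saturated by a minimal separator'' and back, after which everything reduces to bookkeeping over the clique-sum structure along $\Omega$.
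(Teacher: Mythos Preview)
The paper does not prove this proposition at all: it is stated as Theorem~3.15 of Bouchitt\'e and Todinca and used as a black box, so there is no in-paper argument to compare your attempt against.

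As for the attempt itself, the skeleton is the standard one and the parts you actually carry out are fine: the \pmca\ verification via a clique tree of $H$ is correct, and the construction in the $(\Rightarrow)$ direction (complete $\Omega$, then minimally triangulate each block $G[D\cup S_D]$ with $S_D$ a clique, and glue along cliques) is exactly the right object. The two steps you flag as ``the crux'' are, however, genuine gaps rather than routine omissions. In $(\Leftarrow)$, the arc of the hole $C$ whose interior avoids $\Omega$ is connected only in $H$, and its end-edges may be fill edges; you have not shown how to choose $C$ so that the interior lies in a single component of $G-\Omega$ and the attachments to $u,v$ are $G$-edges. In $(\Rightarrow)$, for a fill edge $uv\subseteq\Omega$ covered by $D$, the $G$-path through $D$ may well acquire chords from the triangulation $H_D$, and even if it does not, a length-two path $u\,w\,v$ gives no induced cycle in $H-uv$; so your certificate of non-removability is not yet there.

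Your proposed remedy---invoking the characterisation of minimal triangulations via maximal families of pairwise non-crossing minimal separators and the fact that minimal separators of a minimal completion are minimal separators of $G$---does close both gaps, but it is precisely the machinery developed in the cited paper. At that point you are not giving an independent proof so much as reconstructing Bouchitt\'e--Todinca's argument, which is why the present paper simply cites the result.
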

We will also need the following statement.
\begin{lemma}[cf. Proposition~8 of the arXiv version of~\cite{GrzesikKPP19}]\label{lem:ND}
For every PMC $\pmc$ of $G$ and every $D \in \cc(G-\pmc)$, the
set $N_G(D)$ is a minimal separator.
\end{lemma}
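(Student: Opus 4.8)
The plan is to exhibit two distinct full components of the set $S \coloneqq N_G(D)$, which by the definition of a minimal separator is exactly what we need. The first full component will be $D$ itself, and the second will be produced by combining both defining properties \pmca{} and \pmcb{} of a PMC.

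First I would verify that $D$ is a full component to $S$. Since $D \in \cc(G-\pmc)$, every neighbor of $D$ lies in $\pmc$, and by definition in $S = N_G(D) \subseteq \pmc$. Hence $D$ is disjoint from $S$ and has no neighbor in $V(G)\setminus(D\cup S)$, so $D$ is precisely one connected component of $G - S$; as $N_G(D)=S$, it is full. This step is routine.

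The substance of the argument is locating a second full component. By \pmca{}, the component $D$ is not full to $\pmc$, so $S = N_G(D) \subsetneq \pmc$ and I may fix some $x \in \pmc \setminus S$. Let $C$ be the connected component of $G - S$ containing $x$; since $x \in \pmc$ while $D \cap \pmc = \emptyset$, we have $C \neq D$. I then claim $S \subseteq N_G(C)$. To prove this, fix an arbitrary $s \in S \subseteq \pmc$. If $sx \in E(G)$, then $s$ is adjacent to $x \in C$ and $s \notin C$, so $s \in N_G(C)$. Otherwise $sx$ is a non-edge with both endpoints in $\pmc$, and \pmcb{} supplies a component $D' \in \cc(G-\pmc)$ covering it, so that $\{s,x\} \subseteq N_G(D')$.

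The step I expect to require the most care is arguing that this covering component $D'$ lies inside $C$. Since $D' \subseteq V(G)\setminus\pmc$ and $S \subseteq \pmc$, the set $D'$ is disjoint from $S$, hence contained in $G - S$; and as $x \in N_G(D')$, the connected set $D' \cup \{x\}$ lives in $G - S$, forcing $D' \subseteq C$. Then $s \in N_G(D') \subseteq N_G(C)$, because $s \notin C$. As $s$ was arbitrary, this gives $S \subseteq N_G(C)$, while the reverse inclusion $N_G(C) \subseteq S$ holds automatically since $C$ is a connected component of $G - S$. Therefore $N_G(C) = S$, so $C$ is a second full component distinct from $D$, and $S$ is a minimal separator.
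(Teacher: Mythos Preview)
Your proof is correct: you exhibit $D$ as one full component of $S=N_G(D)$, use \pmca{} to find $x\in\pmc\setminus S$, and then use \pmcb{} (together with the connectedness of each covering component $D'$ through $x$) to show that the component $C$ of $G-S$ containing $x$ is a second full component. Every step is sound, including the verification that $D'\subseteq C$ and that $s\notin C$.

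As for the comparison: the paper does not actually give a proof of this lemma; it is stated as a known fact with a citation to Proposition~8 of the arXiv version of~\cite{GrzesikKPP19}. Your argument is the standard direct proof from the combinatorial characterization \pmca{}--\pmcb{} of potential maximal cliques, and it is exactly the kind of argument one would expect to find in the cited source.
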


\paragraph{Modules.}
Let $G$ be a graph. A set $M\subseteq V(G)$ is a {\em module} of $G$ if $N(x)\setminus M=N(y)\setminus M$ for every $x,y\in M$. 
Note that $\emptyset$, $V(G)$ and all the singletons $\{x\}$ for $x \in V(G)$ are modules; we call these modules {\em trivial}. A graph is {\em prime} if all its modules are trivial.
A module $M$ of $G$ is {\em strong} if $M \neq V(G)$ and $M$
does not overlap with any other module of $G$, i.e., for every module $M'$ of $G$ we have
either $M\subseteq M'$, or $M'\subseteq M$, or $M \cap M'=\emptyset$.

A partition $\MM$ of $V(G)$ is a {\em modular partition} of $G$ if $M$ is a module of $G$ for every $M\in \MM$. 
The {\em quotient graph} $G/\MM$ is a graph with the vertex set $\MM$ and with $M'M\in E(G/\MM)$ if and only if $m'm\in E(G)$ for all $m'\in M'$ and $m\in M$
(since $M'$ and $M$ are modules, $m'm$ is an edge either for all pairs $m'\in M'$ and $m\in M$, or for none). 

It is well-known (cf.~\cite[Lemma~2]{HabibP10}) that if $|V(G)| > 1$ 
then the family of (inclusion-wise) maximal strong modules of $G$
forms a modular partition of $G$ whose quotient graph is either an independent set
(if $G$ is not connected), a clique (if the complement of $G$ is not connected),
  or a prime graph (otherwise).
  We denote this modular partition by $\Mod(G)$ and we let $\Quo(G)\coloneqq G/\Mod(G)$.
For $D \subseteq V(G)$, we abbreviate $\Mod(G[D])$ and $\Quo(G[D])$ with $\Mod(D)$ and $\Quo(D)$, respectively.

\section{Covering minimal separators in $P_7$-free graphs}\label{sec:sep}

This section is devoted to the proof of Theorem~\ref{thm:sepcov}. 
We need the following two results from~\cite{GrzesikKPP19}.
\begin{lemma}[Bi-ranking Lemma of~\cite{GrzesikKPP19}, Lemma~17 of the arXiv version]\label{lem:biranking}
Suppose $X$ is a non-empty finite set and $(X,\leq_1)$ and $(X,\leq_2)$ are two quasi-orders.
Suppose further that every pair of two different elements of $X$ is comparable either with respect to $\leq_1$ or with respect to $\leq_2$.
Then there exists an element $x\in X$ such that for every $y\in X$ we have either $x\leq_1 y$ or $x\leq_2 y$.
\end{lemma}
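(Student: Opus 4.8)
The plan is to argue by contradiction. Suppose no element $x\in X$ has the desired property. Then for every $x\in X$ we may fix some $f(x)\in X$ with $x\not\leq_1 f(x)$ and $x\not\leq_2 f(x)$, and $f(x)\neq x$ by reflexivity of the quasi-orders. Since $X$ is finite, iterating $f$ from an arbitrary start eventually enters a cycle $c_0,c_1,\dots,c_{k-1}$, $k\geq 2$, consisting of distinct elements with $f(c_i)=c_{i+1}$ for all $i$ (indices taken modulo $k$). Call a cyclic sequence of distinct elements a \emph{bad cycle} if $c_i\not\leq_1 c_{i+1}$ and $c_i\not\leq_2 c_{i+1}$ for every $i$; the configuration obtained from $f$ is a bad cycle, so it suffices to prove that $X$ contains no bad cycle.

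First I would observe that the comparability hypothesis orients each edge of a bad cycle: since $c_i$ and $c_{i+1}$ are distinct and $c_i$ is not below $c_{i+1}$ in either order, they must be comparable with $c_{i+1}<_{\gamma_i} c_i$ strictly, for some ``colour'' $\gamma_i\in\{1,2\}$. Now fix a bad cycle of minimum length $k$. If $k=2$ then $c_0$ and $c_1$ are incomparable in both quasi-orders, contradicting the hypothesis; so $k\geq 3$. If all colours $\gamma_i$ are equal to some $j$, then chaining the strict inequalities around the cycle and using transitivity of $\leq_j$ yields $c_0\leq_j c_1$, contradicting $c_0\not\leq_j c_1$; hence there is an index $i$ with $\gamma_i\neq\gamma_{i+1}$, and by the symmetry between the two orders we may assume $\gamma_i=2$ and $\gamma_{i+1}=1$, i.e.\ $c_{i+1}<_2 c_i$ and $c_{i+2}<_1 c_{i+1}$.

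The crux is that the ``shortcut'' edge $c_i\to c_{i+2}$ obtained by deleting the middle vertex $c_{i+1}$ is again bad. Indeed, $c_i\leq_1 c_{i+2}$ together with $c_{i+2}\leq_1 c_{i+1}$ would give $c_i\leq_1 c_{i+1}$ by transitivity, contradicting badness of the edge $c_i\to c_{i+1}$; and $c_i\leq_2 c_{i+2}$ together with $c_{i+1}\leq_2 c_i$ would give $c_{i+1}\leq_2 c_{i+2}$, contradicting badness of the edge $c_{i+1}\to c_{i+2}$. Thus deleting $c_{i+1}$ produces a bad cycle on $k-1$ distinct elements (and $k-1\geq 2$), contradicting minimality of $k$; we conclude that no bad cycle exists. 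The point I expect to require the most thought is precisely the choice of which edge to contract: contracting across two edges of the same colour, or attempting a direct one-element induction, does not obviously preserve ``badness'', whereas contracting across a colour change does, as the computation above shows; the remaining bookkeeping (indices modulo $k$, the harmless use of symmetry between $\leq_1$ and $\leq_2$, and the small cases $k\leq 3$) is routine.
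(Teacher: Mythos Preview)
The paper does not contain its own proof of this lemma: it is quoted verbatim from~\cite{GrzesikKPP19} and used as a black box, so there is no in-paper argument to compare against. That said, your proof is correct. The map $f$ and the pigeonhole extraction of a cycle are sound, the case $k=2$ is exactly the comparability hypothesis, the monochromatic case collapses by transitivity around the cycle, and your key observation---that across a colour change $\gamma_i=2$, $\gamma_{i+1}=1$ one has $c_i\not\leq_1 c_{i+2}$ (else $c_i\leq_1 c_{i+2}\leq_1 c_{i+1}$) and $c_i\not\leq_2 c_{i+2}$ (else $c_{i+1}\leq_2 c_i\leq_2 c_{i+2}$)---is precisely what is needed to contract the cycle and contradict minimality. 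The only cosmetic point is that when $k=3$ the contracted cycle has length~$2$, which you have already excluded; this is of course still a contradiction to minimality, so nothing is missing.
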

\begin{lemma}[Neighborhood Decomposition Lemma of~\cite{GrzesikKPP19}, Lemma~18 of the arXiv version]\label{lem:nei-decomp}
Suppose $G$ is a graph and $D\subseteq V(G)$ is subset of vertices such that $|D|\geq 2$ and $G[D]$ is connected.
Suppose further that vertices $p,q\in D$ respectively belong to different elements $M^p,M^q$ of the modular partition $\Mod(D)$ such that $M^p$ and $M^q$ are adjacent in the 
quotient graph $\Quo(D)$.
Then, for each vertex $u\in N(D)$ at least one of the following conditions holds:
\begin{enumerate}[(a)]
\item\label{cnd:pq} $u\in N[p,q]$;
\item\label{cnd:P4} there exists an induced $P_4$ in $G$ such that
 $u$ is one of its endpoints, while the other three vertices belong to $D$;
\item\label{cnd:(} $\Quo(D)$ is a clique and the neighborhood of $u$ in $D$ is the union of some collection of maximal strong modules in $D$.
\end{enumerate}
In particular, if $\Quo(D)$ is not a clique, then the last condition cannot hold.
\end{lemma}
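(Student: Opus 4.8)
\smallskip
\noindent\emph{Proof plan.}
The plan is to prove the statement in the contrapositive shape: assuming that $u\in N(D)$ satisfies neither~\ref{cnd:pq} nor~\ref{cnd:P4}, I will derive~\ref{cnd:(}; the concluding ``in particular'' sentence is then automatic, since~\ref{cnd:(} explicitly requires $\Quo(D)$ to be a clique. So fix such a $u$ (in particular $u\notin D$), set $N:=N_G(u)\cap D$ and $\bar N:=D\setminus N$, and note $N\neq\emptyset$, hence $\bar N\subsetneq D$. From the failure of~\ref{cnd:pq}, $u$ is nonadjacent to both $p$ and $q$, so $p,q\in\bar N$; and since $M^p$ and $M^q$ are adjacent in $\Quo(D)$ we have $pq\in E(G)$, so $p$ and $q$ lie in a single connected component $C$ of $G[\bar N]$.

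The crucial step is to convert the failure of~\ref{cnd:P4} into structure. First I would show that for every $a\in N$ the set $N_G(a)\cap\bar N$ is a union of connected components of $G[\bar N]$: if some $x\in N_G(a)\cap\bar N$ had a neighbour $y\in\bar N\setminus N_G(a)$, then $u\text{-}a\text{-}x\text{-}y$ would be an induced $P_4$ whose last three vertices lie in $D$ --- the non-edges $ux,uy$ hold because $x,y\in\bar N$, the non-edge $ay$ holds by the choice of $y$, and the four vertices are distinct --- contradicting the failure of~\ref{cnd:P4}. Thus each $N_G(a)\cap\bar N$ is closed under taking neighbours inside $\bar N$, which proves the claim. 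Consequently every connected component $C'$ of $G[\bar N]$ is a module of $G[D]$: its vertices have no neighbours in $\bar N\setminus C'$, and, by the claim applied to each $a\in N$, they all have the same neighbours in $N$.

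From here on it is modular-decomposition bookkeeping. In particular $C$ is a module of $G[D]$ meeting two distinct (hence disjoint) maximal strong modules $M^p$ and $M^q$; strongness then forces $M^p\subseteq C$ and $M^q\subseteq C$ (the alternative $C\subseteq M^p$ would put $q$ into $M^p$). So $C$ is a module of $G[D]$ not contained in a single maximal strong module, hence a union of maximal strong modules, say $C=\bigcup_{M\in\mathcal B}M$ with $\{M^p,M^q\}\subseteq\mathcal B$; then $\mathcal B$ is a module of $\Quo(D)$ with $|\mathcal B|\geq 2$ and $\mathcal B\neq V(\Quo(D))$ (because $C\subseteq\bar N\subsetneq D$), i.e.\ a nontrivial module. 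Therefore $\Quo(D)$ is not prime, and since $G[D]$ is connected it must be a clique --- the first half of~\ref{cnd:(}. For the second half, cliqueness of $\Quo(D)$ means that any two vertices of $D$ in different maximal strong modules are adjacent; hence a vertex $v\in\bar N\setminus C$ would have no neighbour in $C\supseteq M^p\cup M^q$ and would therefore be forced into both $M^p$ and $M^q$, which is impossible. Thus $\bar N=C$ is a union of maximal strong modules, and so is $N_G(u)\cap D=D\setminus\bar N$, completing~\ref{cnd:(}.

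The one step I expect to require genuine insight is the reformulation in the second paragraph: realising that ``no induced $P_4$ starts at $u$ and runs into $D$'' is exactly the statement that every ``downward'' neighbourhood $N_G(a)\cap\bar N$ is a union of connected components of $G[\bar N]$, so that these components become modules of $G[D]$. Everything afterwards uses only standard modular-decomposition facts --- every module is contained in a single maximal strong module or is a union of maximal strong modules; a union of maximal strong modules is a module of $G[D]$ exactly when the corresponding vertex set is a module of $\Quo(D)$; prime graphs have no nontrivial module --- together with the hypothesis $M^pM^q\in E(\Quo(D))$ and the clique/independent/prime trichotomy for $\Quo(D)$, in which the ``independent'' case is excluded by connectedness of $G[D]$.
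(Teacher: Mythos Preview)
The paper does not contain a proof of this lemma; it is quoted from~\cite{GrzesikKPP19} (Lemma~18 of the arXiv version) and used as a black box, so there is nothing here to compare your attempt against.

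That said, your argument is correct. The key $P_4$ observation---that for each $a\in N$ the set $N_G(a)\cap\bar N$ is a union of connected components of $G[\bar N]$, since otherwise $u\text{-}a\text{-}x\text{-}y$ is an induced $P_4$ ending in $D$---immediately makes every component of $G[\bar N]$ a module of $G[D]$. The component $C$ through $p$ and $q$ then swallows $M^p$ and $M^q$ by strongness, so the corresponding set $\mathcal B\subseteq\Mod(D)$ is a nontrivial module of $\Quo(D)$ (here $C\subseteq\bar N\subsetneq D$ gives $\mathcal B\neq\Mod(D)$); hence $\Quo(D)$ is not prime, and connectedness of $G[D]$ rules out the independent-set case, leaving $\Quo(D)$ a clique. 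Finally, cliqueness forces any $v\in\bar N\setminus C$ into both $M^p$ and $M^q$, a contradiction, so $\bar N=C$ and $N_G(u)\cap D=D\setminus C$ are both unions of maximal strong modules. All steps are sound, and this is indeed the natural route to condition~\ref{cnd:(}.
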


Let $G$ be a $P_7$-free graph, let $S$ be a minimal separator in $G$, and let $A_1$ and $A_2$
be the vertex sets of two full components of $S$. 
If $|A_i| = 1$ for some $i\in \{1,2\}$, then we are done by setting $S' \coloneqq A_i$, so assume $|A_1|,|A_2| > 1$.
For each $i\in \{1,2\}$, fix two different maximal strong modules $M^p_i$ and $M^q_i$ 
of $G[A_i]$ that are adjacent in $\Quo(A_i)$. 
Furthermore, pick arbitrary $p_i \in M^p_i$ and $q_i \in M^q_i$.

For each $i\in \{1,2\}$, we apply Lemma~\ref{lem:nei-decomp} to $D \coloneqq A_i$ and $N(D) = S$.
We say that a vertex $x \in S$ is of type $\eqref{cnd:pq}_i$ if $x \in N(p_i) \cup N(q_i)$.
We say that a vertex $x \in S$ is of type $\eqref{cnd:P4}_i$ if $x$ is not of type $\eqref{cnd:pq}_i$
and there is an induced $P_4$ in $G$ with $u$ being one of the endpoints
and the other three vertices belonging to $A_i$.
Finally, we say that a vertex $x \in S$ is of type $\eqref{cnd:(}_i$ if $x$ is neither of type
$\eqref{cnd:pq}_i$ nor $\eqref{cnd:P4}_i$. 
Lemma~\ref{lem:nei-decomp} asserts that if there are vertices of type $\eqref{cnd:(}_i$,
then $\Quo(A_i)$ is a clique and the neighborhood in $A_i$ of every vertex of this type
is the union of a collection of maximal strong modules of $G[A_i]$.
For $\alpha,\beta \in \{a,b,c\}$, let $S_{\alpha\beta}$ be the set of vertices $x \in S$
that are of type $(\alpha)_1$ and~$(\beta)_2$.

We need the following claim.
\begin{claim}\label{cl:edge}
Let $i \in \{1,2\}$ and let $x,y \in S$ be of type $\eqref{cnd:(}_i$.
Then $A_i \cap (N(x) \setminus N(y))$ is complete to $A_i \cap (N(y) \setminus N(x))$.
\end{claim}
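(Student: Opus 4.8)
The plan is to read the claim off almost directly from Lemma~\ref{lem:nei-decomp}, the only real content being bookkeeping with the modular partition. Fix $i$ and abbreviate $A \coloneqq A_i$. Since $x$ and $y$ are of type $\eqref{cnd:(}_i$, Lemma~\ref{lem:nei-decomp} (in the form recorded right after the definition of the types) tells us two things simultaneously: $\Quo(A)$ is a clique, and for each $z \in \{x,y\}$ the set $N(z) \cap A$ is the union of some sub-collection of the maximal strong modules that make up $\Mod(A)$. I would record this as $N(x) \cap A = \bigcup \mathcal{M}_x$ and $N(y) \cap A = \bigcup \mathcal{M}_y$ for some $\mathcal{M}_x, \mathcal{M}_y \subseteq \Mod(A)$.

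Next I would translate the set differences appearing in the claim into differences of module-collections. Because $\Mod(A)$ is a \emph{partition} of $A$ into pairwise disjoint blocks, subtracting one union of blocks from another removes exactly the shared blocks, so $A \cap (N(x) \setminus N(y)) = \bigcup (\mathcal{M}_x \setminus \mathcal{M}_y)$ and $A \cap (N(y) \setminus N(x)) = \bigcup (\mathcal{M}_y \setminus \mathcal{M}_x)$. In particular, the two sub-collections $\mathcal{M}_x \setminus \mathcal{M}_y$ and $\mathcal{M}_y \setminus \mathcal{M}_x$ are disjoint, so any block meeting $A \cap (N(x)\setminus N(y))$ is different from any block meeting $A \cap (N(y)\setminus N(x))$.

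It then remains to take arbitrary $a \in A \cap (N(x)\setminus N(y))$ and $b \in A \cap (N(y)\setminus N(x))$, let $M, M' \in \Mod(A)$ be the blocks with $a \in M$ and $b \in M'$, observe $M \in \mathcal{M}_x \setminus \mathcal{M}_y$ and $M' \in \mathcal{M}_y \setminus \mathcal{M}_x$ hence $M \neq M'$, and conclude: since $\Quo(A)$ is a clique we have $MM' \in E(\Quo(A))$, which by the definition of the quotient graph means $M$ is complete to $M'$ in $G$, so $ab \in E(G)$. As $a,b$ were arbitrary (and the statement is vacuous if either of the two sets is empty), this proves the claim.

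I do not expect a genuine obstacle here: the whole point is already packaged inside Lemma~\ref{lem:nei-decomp}. The two spots that warrant a moment of care are (i) that a set difference of unions of maximal strong modules is again such a union, which is exactly where disjointness of the partition $\Mod(A)$ enters, and (ii) that a block contributing to $A \cap (N(x)\setminus N(y))$ is necessarily distinct from one contributing to $A \cap (N(y)\setminus N(x))$, which is what lets us invoke the cliqueness of $\Quo(A)$ to supply the edge $ab$.
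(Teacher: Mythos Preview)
Your proof is correct and follows essentially the same approach as the paper: invoke Lemma~\ref{lem:nei-decomp} to get that $\Quo(A_i)$ is a clique and that $N(x)\cap A_i$, $N(y)\cap A_i$ are unions of maximal strong modules, then observe that the two set differences are unions of disjoint collections of modules, hence complete to each other. The paper's version is just a two-sentence compression of what you have written out in full.
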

\begin{proof}
By Lemma~\ref{lem:nei-decomp},
$\Quo(A_i)$ is a clique and both $A_i \cap (N(x) \setminus N(y))$
and $A_i \cap (N(y) \setminus N(x))$ are the unions of some disjoint collections of maximal strong modules of $G[A_i]$.
The claim follows.
\cqed\end{proof}

Since $G$ is $P_7$-free, $S_{bb} = \emptyset$. 
Furthermore, if we set $R_a \coloneqq \{p_1,q_1,p_2,q_2\}$, then
$$S_{aa} \cup S_{ab} \cup S_{ac} \cup S_{ba} \cup S_{ca} \subseteq N(R_a).$$
In the rest of the proof, we construct sets $R_{bc}, R_{cb}$ and $R_{cc}$ such that
$S_{\alpha\beta}\subseteq N[R_{\alpha\beta}]$ for $\alpha\beta \in \{bc,cb,cc\}$.
We will conclude that $S'\coloneqq R_a\cup R_{bc}\cup R_{cb}\cup R_{cc}$
satisfies the statement of the lemma, because $|R_a|=4$
and we will ensure that $|R_{bc}|,|R_{cb}|\leq 5$ and $|R_{cc}|\leq 8$.

\medskip

We start with constructing the set $R_{bc}$. 
If $S_{bc} = \emptyset$, then we set $R_{bc} = \emptyset$.
Otherwise, let $v \in S_{bc}$ be a vertex with inclusion-wise minimal
set $A_2 \cap N(v)$. 
Furthermore, let $w \in A_2$ be an arbitrary neighbor of $v$ in $A_2$;
$w$ exists since $A_2$ is a full component of $S$.
Also, let $v, u^1, u^2, u^3$ be vertices of an induced $P_4$ with $u^1,u^2,u^3 \in A_1$; recall here that $v$ is of type $\eqref{cnd:P4}_1$. 
We set $R_{bc} \coloneqq \{u^1,u^2,u^3,v,w\}$ and claim that $S_{bc} \subseteq N[S_{bc}]$.

Assume the contrary, and let $v' \in S_{bc} \setminus N[R_{bc}]$. 
By the choice of $v$ and since $w \in A_2 \cap (N(v) \setminus N(v'))$,
there exists $w' \in A_2 \cap (N(v') \setminus N(v))$. 
By Claim~\ref{cl:edge}, $ww' \in E(G)$. 
Then, $v'-w'-w-v-u^1-u^2-u^3$ is an induced $P_7$ in $G$, a contradiction.

Hence, we constructed $R_{bc} \subseteq V(G)$ of size at most $5$
such that $S_{bc} \subseteq N[R_{bc}]$.
A symmetric reasoning yields $R_{cb} \subseteq V(G)$ of size at most $5$
such that $S_{cb} \subseteq N[R_{cb}]$. 

\medskip

We are left with constructing $R_{cc}$. If $S_{cc} = \emptyset$, then we take $R_{cc} = \emptyset$
and conclude. In the remaining case, $S_{cc}$ is non-empty, so both $\Quo(A_1)$ and $\Quo(A_2)$ are cliques.

For each $i\in \{1,2\}$,
we define a quasi-order $\leq_i$ on $S_{cc}$ as follows. 
For $x,y \in S_{cc}$, $x \leq_i y$ if $A_i \cap N(x) \subseteq A_i \cap N(y)$.
An unordered pair $xy \in \binom{S_{cc}}{2}$ is a \emph{butterfly}
if $x$ and $y$ are incomparable both in $\leq_1$ and in $\leq_2$, that is, 
  if each of the following four sets is nonempty:
\begin{align*}
&A_1 \cap (N(x) \setminus N(y)), & &A_1 \cap (N(y) \setminus N(x)),\\
&A_2 \cap (N(x) \setminus N(y)), & &A_2 \cap (N(y) \setminus N(x)).
\end{align*}
\begin{figure}[tb]
\begin{center}
\includegraphics{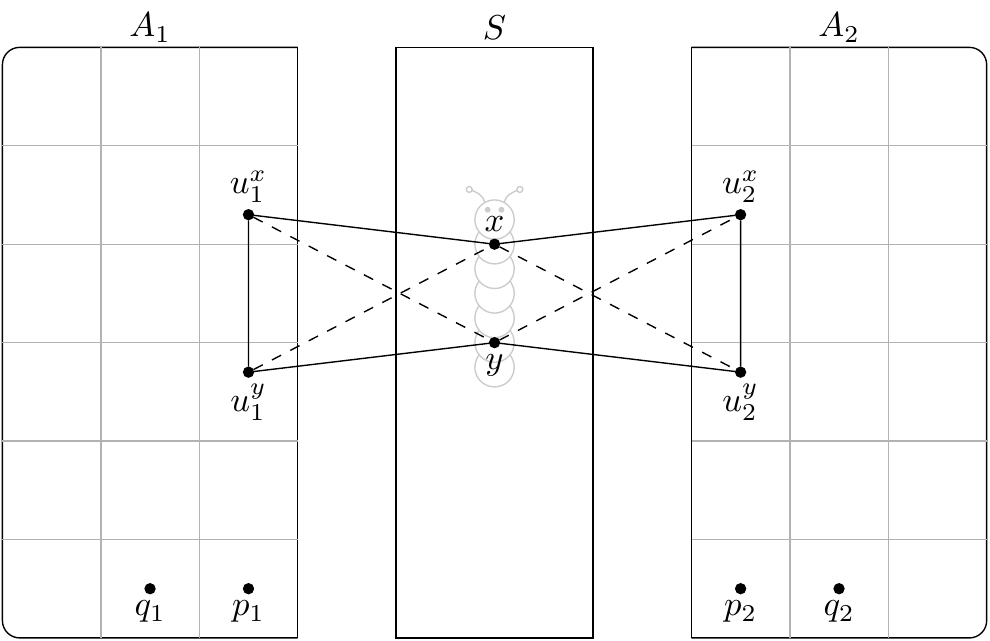}
\caption{A butterfly $xy$. The gray grid presents
  the partion of $G[A_i]$ into maximal strong modules.}\label{fig:butterfly}
\end{center}
\end{figure}
See Figure~\ref{fig:butterfly} for an illustration.

Lemma~\ref{lem:biranking} allows us to easily dominate subsets of $S_{cc}$ that do
not contain any butterflies:
\begin{claim}\label{cl:bez-motylka}
Let $T \subseteq S_{cc}$ be such that there is no butterfly $xy$ with $x,y \in T$.
Then there exist a vertex $a_1 \in A_1$ and $a_2 \in A_2$ such that $T \subseteq N(a_1) \cup N(a_2)$.
\end{claim}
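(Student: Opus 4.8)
The plan is to derive this directly from the Bi-ranking Lemma (Lemma~\ref{lem:biranking}), applied with $X = T$ and the two quasi-orders $\leq_1$ and $\leq_2$ restricted to $T$. The first step is to check the hypothesis of Lemma~\ref{lem:biranking}: the assumption that $T$ contains no butterfly means exactly that every pair of distinct elements $x, y \in T$ is comparable in $\leq_1$ or in $\leq_2$, since ``incomparable in both $\leq_1$ and $\leq_2$'' is precisely the definition of a butterfly $xy$. (If $T = \emptyset$ the statement is vacuous, as $A_1$ and $A_2$ are nonempty; so I would assume $T \neq \emptyset$.)

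Applying Lemma~\ref{lem:biranking} then yields an element $x^* \in T$ such that for every $y \in T$ we have $x^* \leq_1 y$ or $x^* \leq_2 y$; unfolding the definitions of $\leq_1$ and $\leq_2$, this says that $A_1 \cap N(x^*) \subseteq A_1 \cap N(y)$ or $A_2 \cap N(x^*) \subseteq A_2 \cap N(y)$, for each $y \in T$.

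The final step is to read the two dominating vertices off a neighbourhood of $x^*$. Since $A_1$ and $A_2$ are full components of $S$, we have $N(A_1) = N(A_2) = S$, and $x^* \in T \subseteq S_{cc} \subseteq S$, so $A_i \cap N(x^*) \neq \emptyset$ for each $i \in \{1,2\}$; I would pick an arbitrary $a_i \in A_i \cap N(x^*)$. Then for any $y \in T$: if $x^* \leq_1 y$ then $a_1 \in A_1 \cap N(x^*) \subseteq N(y)$, so $y \in N(a_1)$; and symmetrically if $x^* \leq_2 y$ then $y \in N(a_2)$. Since one of these two cases always holds, $T \subseteq N(a_1) \cup N(a_2)$, which is what we wanted.

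There is essentially no hard part here: the proof is a one-shot application of the Bi-ranking Lemma. The only points that require a little care are the verification that the no-butterfly hypothesis is literally the comparability hypothesis of Lemma~\ref{lem:biranking}, and the observation that full-componentness of $A_1$ and $A_2$ is exactly what guarantees $x^*$ has a neighbour in each of $A_1$ and $A_2$, so that $a_1$ and $a_2$ are well-defined.
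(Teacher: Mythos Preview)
Your proof is correct and essentially identical to the paper's own argument: both apply the Bi-ranking Lemma to $T$ with the restricted quasi-orders $\leq_1,\leq_2$, obtain a bi-minimal element $x^*$, and then pick one neighbour of $x^*$ in each full component $A_1,A_2$. The only difference is cosmetic (you spell out slightly more explicitly why the $T=\emptyset$ case is trivial and why the no-butterfly assumption matches the hypothesis of Lemma~\ref{lem:biranking}).
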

\begin{proof}
If $T=\emptyset$, the claim is trivial, so assume otherwise.
Let us focus on quasi-orders $\leq_1$ and $\leq_2$, restricted to $T$.
Since there are no butterflies in $T$, the prerequisities of Lemma~\ref{lem:biranking} 
are satisfied for $(T,\leq_1)$ and $(T,\leq_2)$. 
Hence, there exists $x \in T$ with $x \leq_1 y$ or $x \leq_2 y$ for every $y \in T$.
For $i\in \{1,2\}$, let $a_i$ be an arbitrary neighbor of $x$ in $A_i$
(it exists as $A_i$ is a full component of $S$). 
For every $y \in T$, there exists $i \in \{1,2\}$ such that $x \leq_i y$,
hence $a_iy \in E(G)$. 
We conclude that $T \subseteq N(a_1) \cup N(a_2)$, as desired.
\cqed\end{proof}
If there is no butterfly at all, then we apply Claim~\ref{cl:bez-motylka} to $T = S_{cc}$,
obtaining vertices $a_1,a_2$ and set $R_{cc} = \{a_1,a_2\}$. 
Thus, we are left with the case where at least one butterfly exists.

Let $xy$ be a butterfly with inclusion-wise minimal set
$$(A_1 \cup A_2) \cap N(\{x,y\}).$$
Furthermore, pick the following four vertices
\begin{align*}
u_1^{x} \in& A_1 \cap (N(x) \setminus N(y)),& u_1^{y} \in &A_1 \cap (N(y) \setminus N(x)),\\
u_2^{x} \in& A_2 \cap (N(x) \setminus N(y)),& u_2^{y} \in &A_2 \cap (N(y) \setminus N(x)).
\end{align*}
Claim~\ref{cl:edge} ensures that $u_1^{x}u_1^{y} \in E(G)$ and $u_2^{x}u_2^{y} \in E(G)$.

Set 
$$R' \coloneqq \{x, y, u_1^{x}, u_1^{y}, u_2^{x}, u_2^{y}\}.$$
Let $T \coloneqq S_{cc} \setminus N[R']$. We claim the following:
\begin{claim}\label{cl:motylek}
There is no butterfly $x'y'$ with $x',y' \in T$.
\end{claim}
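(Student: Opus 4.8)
The plan is to argue by contradiction: suppose $x'y'$ is a butterfly with $x',y'\in T$. Since $x',y'\notin N[R']$, in particular they are non-adjacent to $x,y$ and to all four vertices $u_1^x,u_1^y,u_2^x,u_2^y$. The key is to compare the four ``private neighborhood'' sets of $x'y'$ inside $A_1$ and $A_2$ with those of the butterfly $xy$, using the minimality of $(A_1\cup A_2)\cap N(\{x,y\})$. Concretely, I would first observe that because $xy$ was chosen with inclusion-wise minimal $N(\{x,y\})$ restricted to $A_1\cup A_2$, the set $N(\{x',y'\})\cap(A_1\cup A_2)$ cannot be strictly smaller; more usefully, for each $i\in\{1,2\}$ at least one of $x',y'$ must have a neighbor in $A_i$ outside $N(\{x,y\})$, or else $\{x',y'\}$ would give a butterfly whose combined neighborhood is contained in that of $xy$, contradicting minimality (after checking that this inclusion is actually strict, which it is because $x'\neq x,y$ and $x'$ is non-adjacent to the $u$'s while $x$ is adjacent to $u_1^x$, etc.).

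The heart of the argument is then to build an induced $P_7$. In $A_1$, pick $a_1\in A_1\cap N(\{x',y'\})$; say $a_1\in N(x')$. We have the edge $u_1^xu_1^y$ inside $A_1$ from Claim~\ref{cl:edge} (applied with $i=1$, since $x,y$ are of type $\eqref{cnd:(}_1$), and $x$ is adjacent to $u_1^x$, $y$ to $u_1^y$. Symmetrically in $A_2$ we have the edge $u_2^xu_2^y$ with $x$ adjacent to $u_2^x$, $y$ to $u_2^y$. The idea is to route a path $a_1' - x' - (\text{through }A_1\text{ or }A_2) - y' - a_2'$ or, more likely, a path of the form
$$a - x' - ? - y - u_2^y \quad\text{or}\quad u_1^x - x - u_2^x - \cdots,$$
chaining: a neighbor of $x'$ in $A_1$, then $x'$, then cross via $x$ (non-adjacent to $x'$? no — we need $x$ adjacent to $x'$ for a path, but they are non-adjacent). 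So instead I would use the butterfly edges: take $u_1^x - x - u_2^x$ as a length-2 piece entirely outside $N[x',y']$ on the $x,u^x$ side, and $u_1^y - y - u_2^y$ on the $y,u^y$ side, then connect these two pieces and hook $x',y'$ on using their private neighbors. The precise $P_7$ I expect to extract is something like
$$ u_1^{x} - x - u_2^{x} \;\text{—? —}\; u_2^{y} - y - u_1^{y}, $$
a $P_6$ on $R'\setminus\{x',y'\}$ plus suitable connecting behavior, extended by one of $x'$ or $y'$ attached via a private neighbor in the appropriate $A_i$; alternatively a direct $P_7$ combining a private neighbor of $x'$ in $A_1$, the chain through $A_1$ to $x$ or to $u_1^y$, then through the other component.

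**Main obstacle.** The hard part will be organizing the case analysis on \emph{which} of $x'$, $y'$ has a private neighbor in $A_1$ versus $A_2$, and then locating the induced $P_7$ without accidental chords — one must use that $x',y'\notin N[R']$ to kill edges between $\{x',y'\}$ and $\{x,y,u_1^x,u_1^y,u_2^x,u_2^y\}$, use Claim~\ref{cl:edge} to guarantee the crossing edges $u_1^xu_1^y$ and $u_2^xu_2^y$, use the butterfly condition on $xy$ to guarantee non-edges like $xu_1^y$, $yu_1^x$ (since $u_1^x\notin N(y)$, $u_1^y\notin N(x)$ by choice), and finally use that $x',y'$ have private neighbors \emph{outside} $N(\{x,y\})$ to keep the endpoint attachments chordless. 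I would be careful that $x'y'$ being a butterfly means $x'\neq y'$ and gives all four private sets nonempty, which is what lets me pick the private neighbor of $x'$ (or $y'$) in whichever $A_i$ the minimality forces; balancing these choices so that exactly one induced $P_7$ falls out is the only real work, and it is a finite, bounded-size check once the adjacency constraints above are tabulated.
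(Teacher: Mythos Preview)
Your overall strategy---assume a butterfly $x'y'$ in $T$, use minimality of $(A_1\cup A_2)\cap N(\{x,y\})$ to find a vertex $w$ outside $N(\{x,y\})$, then manufacture an induced $P_7$---is exactly the paper's approach. However, two concrete points in your sketch do not go through as written.

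First, the minimality step is overclaimed. Inclusion-wise minimality of $(A_1\cup A_2)\cap N(\{x,y\})$ among butterflies, together with $u_1^{x}\in N(\{x,y\})\setminus N(\{x',y'\})$, yields a single witness $w\in (A_1\cup A_2)\cap\bigl(N(\{x',y'\})\setminus N(\{x,y\})\bigr)$; it does \emph{not} give one such witness in each $A_i$, because the minimality is over the union $A_1\cup A_2$, not over each side separately. Fortunately the paper needs only one $w$ (and by symmetry takes $w\in A_1$, $wx'\in E(G)$).

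Second, and more importantly, your candidate paths do not work and the missing ingredient is not just bookkeeping. The ``$P_6$'' you write, $u_1^{x}-x-u_2^{x}-u_2^{y}-y-u_1^{y}$, has the chord $u_1^{x}u_1^{y}$ (this edge is forced by Claim~\ref{cl:edge}), so it is not induced. The paper instead splits on whether $xy\in E(G)$. If $xy\notin E(G)$, one gets the induced $P_7$
\[
x'-w-u_1^{x}-x-u_2^{x}-u_2^{y}-y,
\]
where $wu_1^{x}\in E(G)$ because $w$ and $u_1^{x}$ lie in different maximal strong modules of $A_1$ (one is in $N(x)$, the other is not) and $\Quo(A_1)$ is a clique. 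If $xy\in E(G)$, this path has a chord and one needs a seventh vertex outside $N(x)\cup N(y)$ in $A_2$. Here the paper reaches back to the globally fixed vertex $p_2\in A_2$: since $x,y\in S_{cc}$ are of type~$\eqref{cnd:(}_2$, we have $p_2\notin N(x)\cup N(y)$, and $p_2u_2^{y}\in E(G)$ because $p_2$ and $u_2^{y}$ sit in different modules of $A_2$. The resulting induced $P_7$ is
\[
x'-w-u_1^{x}-x-y-u_2^{y}-p_2.
\]
You never invoke $p_2$ (or $q_2$), and without it the $xy\in E(G)$ case does not close using only vertices from $R'\cup\{x',y',w\}$. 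So the gap is precisely: the case split on $xy\in E(G)$ and the use of $p_2$ in that branch.
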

\begin{proof}
Assume the contrary, and let $x'y'$ be a butterfly with $x',y' \in T$.
By the minimality of $xy$, as $u_1^{x} \in (A_1 \cup A_2) \cap N(\{x,y\})$
but $u_1^{x} \notin (A_1 \cup A_2) \cap N(\{x',y'\})$,
there exists $w \in (A_1 \cup A_2) \cap (N(\{x',y'\}) \setminus N(\{x,y\}))$.
By symmetry, assume that $w \in A_1$ and $x'w \in E(G)$; see Figure~\ref{fig:motylek}.

By Claim~\ref{cl:edge}, $wu_1^{x} \in E(G)$ and $wu_1^{y} \in E(G)$.
If $xy \in E(G)$, then $x'-w-u_1^{x}-x-y-u_2^{y}-p_2$ would be an induced $P_7$ in $G$.
Otherwise, 
if $xy \notin E(G)$, then $x'-w-u_1^{x}-x-u_2^{x}-u_2^{y}-y$ would be an induced $P_7$ in $G$.
As in both cases we have obtained a contradiction, this finishes the proof.
\cqed\end{proof}
\begin{figure}[tb]
\begin{center}
\includegraphics{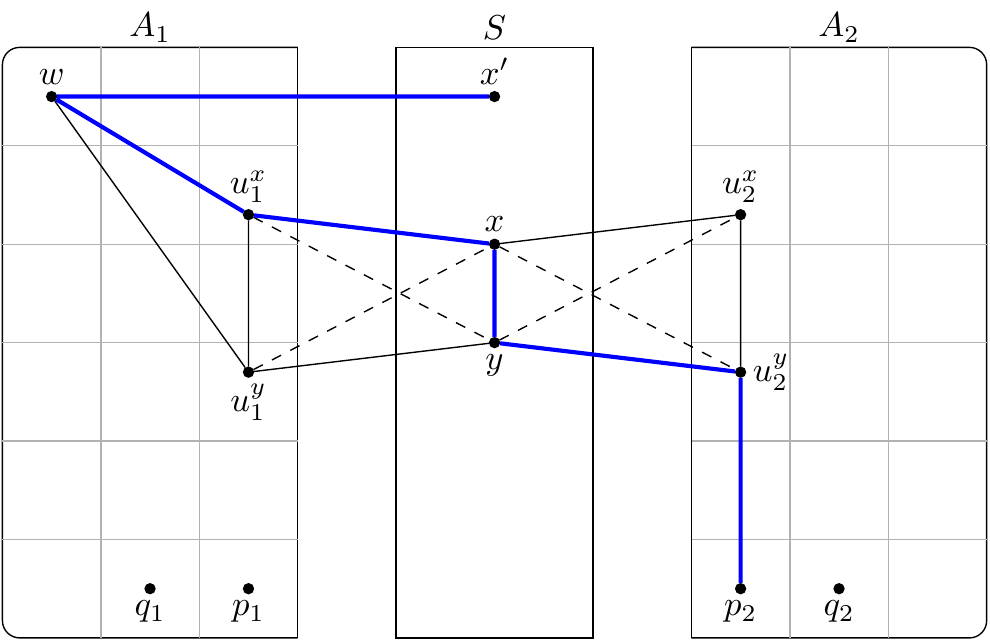}\\[3mm]%
\includegraphics{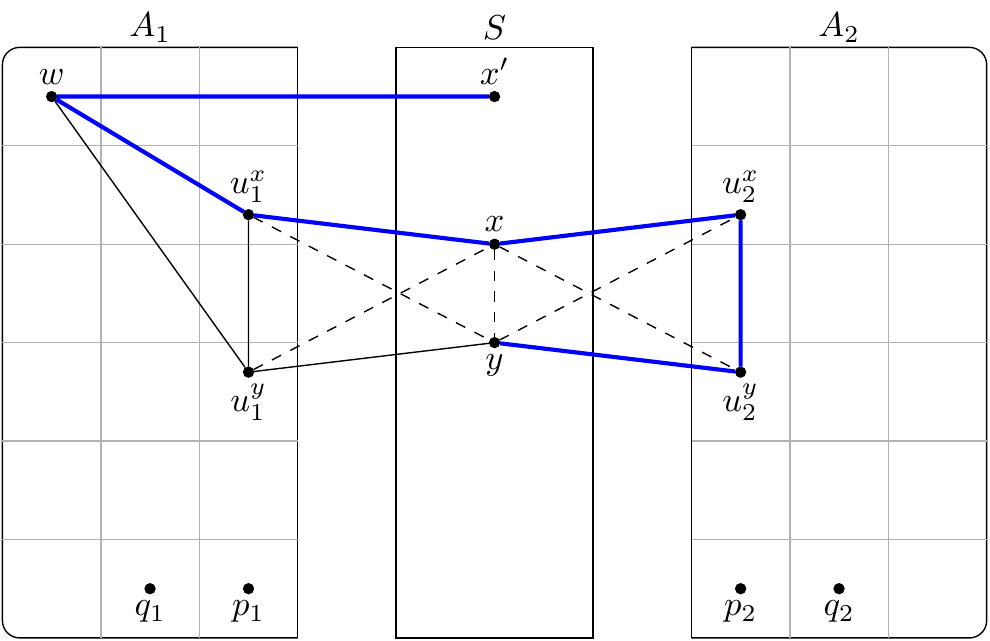}
\caption{Two cases where a $P_7$ appears in the proof of Claim~\ref{cl:motylek}.}\label{fig:motylek}
\end{center}
\end{figure}
By Claim~\ref{cl:motylek}, we can apply Claim~\ref{cl:bez-motylka}
to $T$ and obtain vertices $a_1 \in A_1$ and $a_2 \in A_2$ with $T \subseteq N(a_1) \cup N(a_2)$.
Hence, we can take $R_{cc} \coloneqq R' \cup \{a_1,a_2\}$, thus concluding the proof of Theorem~\ref{thm:sepcov}.

\section{Covering PMCs in $P_7$-free graphs}\label{sec:pmc}

We now prove the following statement which, together with Theorem~\ref{thm:sepcov}
and Lemma~\ref{lem:ND}, immediately implies Theorem~\ref{thm:pmccov}.
\begin{lemma}\label{lem:pmccov2}
Let $G$ be a $P_7$-free graph and let $\pmc$ be a potential maximal clique in $G$.
Then there exists a set $\pmc' \subseteq \pmc$ of size at most $2$
and a set $\mathcal{D}' \subseteq \cc(G-\pmc)$ of size at most $3$
such that
$$\pmc \subseteq N[\pmc'] \cup \bigcup_{D \in \mathcal{D}'} N(D).$$
\end{lemma}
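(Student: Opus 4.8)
The plan is to reduce the PMC-covering problem to the separator-covering machinery of Section~\ref{sec:sep}, exploiting the two defining properties \pmca{} and \pmcb{} of a PMC together with the structure already established for minimal separators. Pick an arbitrary non-edge $uv$ inside $\pmc$ (if $\pmc$ is a clique, then $\pmc$ is itself a minimal separator or a singleton-ish object and Theorem~\ref{thm:sepcov} already suffices, so we may assume a non-edge exists). By \pmcb{}, there is a component $D_0 \in \cc(G-\pmc)$ covering $uv$, i.e.\ $\{u,v\} \subseteq N(D_0)$; by Lemma~\ref{lem:ND}, $S_0 \coloneqq N_G(D_0)$ is a minimal separator of $G$ with $D_0$ as one full component. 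The first step is to run the argument behind Theorem~\ref{thm:sepcov} on $S_0$ — but, crucially, choosing $D_0$ itself as one of the two full components $A_1$ of $S_0$. The point is that $\pmc \setminus N(D_0)$ is exactly the part of $\pmc$ not yet seen; since $D_0$ is a component of $G-\pmc$, every vertex of $\pmc \setminus N(D_0)$ has no neighbour in $D_0$, so these vertices all lie in the same "side" and we can hope to absorb them into a second full component.

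The key structural step: let $A_2$ be the union of $\pmc \setminus N(D_0)$ together with the other components of $G - \pmc$ that attach to it — more precisely, one should show that the vertices of $\pmc$ not dominated by (a small subset of) $D_0$ together with suitable components of $G-\pmc$ form, or are contained in, a single full component of some minimal separator $S$ with $D_0$ a full component on the other side. Then applying the Section~\ref{sec:sep} analysis to $S$ with $A_1 = D_0$ yields that $S$ — and in particular the portion of $\pmc$ lying in $S$ — is covered by $N[R_a \cup R_{bc} \cup R_{cb} \cup R_{cc}]$, where $R_a = \{p_1,q_1,p_2,q_2\}$ uses two vertices from $D_0$ and two from $A_2$, and each of $R_{bc}, R_{cb}, R_{cc}$ is a bounded set of vertices drawn from $A_1 \cup A_2 \cup S$. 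Since $A_1 = D_0$ is a component of $G-\pmc$, whatever part of these witness sets lies in $D_0$ can be replaced by "the single component $D_0$", contributing $D_0$ to $\mathcal{D}'$; vertices landing in $A_2$ that happen to be components of $G-\pmc$ similarly get folded into $\mathcal{D}'$; vertices of $\pmc$ itself go into $\pmc'$; and one has to check these budgets come out as $|\pmc'| \le 2$ and $|\mathcal{D}'| \le 3$. The bound $68 = 2\cdot ? + 3 \cdot ?$ in Theorem~\ref{thm:pmccov} is then obtained by plugging the size-$22$ bound of Theorem~\ref{thm:sepcov} into $|N[\pmc']| + \sum_{D \in \mathcal D'}$-type accounting, which forces $|\pmc'|\le 2$ and $|\mathcal D'|\le 3$ to be the right split of the $22$-term witness set into its $\pmc$-part and its $(G-\pmc)$-part.

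More concretely, I would proceed as follows. (1) Handle the trivial cases: $\pmc$ a clique, or $|D_0| = 1$. (2) Show that for \emph{every} non-edge $uv$ of $\pmc$ and the covering component $D$, the minimal separator $N(D)$ contains $\{u,v\}$; choose $D$ cleverly, e.g.\ a component maximizing $|\pmc \cap N(D)|$, so that $\pmc \setminus N(D)$ is as small as possible, and argue $\pmc \setminus N(D)$ induces a connected-enough set (using \pmca{} to rule out $N(D) = \pmc$ and $P_7$-freeness to control long induced paths between $D$ and $\pmc \setminus N(D)$). (3) Build the second full component $A_2$ as the union of $\pmc \setminus N(D)$ with the components of $G - N(D)$ other than $D$ that touch it, verify $N(A_2) = N(D)$ (so it is genuinely a full component of the minimal separator $N(D)$), which requires checking that $\pmc \setminus N(D) \subseteq N(A_2)$ and that $A_2$ reaches all of $N(D)$. (4) Invoke the internal statement of the Theorem~\ref{thm:sepcov} proof (the sets $R_a, R_{bc}, R_{cb}, R_{cc}$ with their explicit size bounds $4,5,5,8$) applied to $S = N(D)$, $A_1 = D$, $A_2$ as built; replace each witness vertex lying in $D$ by the component $D$ itself, each witness vertex lying in a component of $G-\pmc$ inside $A_2$ by that component, and collect the $\le 2$ witness vertices lying in $\pmc$. (5) Count: the number of distinct components used is at most $3$ (one of them being $D$), and the number of $\pmc$-vertices used is at most $2$. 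The main obstacle, and the step deserving the most care, is step (3)–(4): making sure that the second full component $A_2$ can be taken so that its contribution to each of $R_a, R_{bc}, R_{cb}, R_{cc}$ collapses to at most two more components of $G - \pmc$, rather than proliferating — this is where one must use that $A_2$ is assembled from few components and where the delicate choice in step (2) pays off. A clean way to see that the component count stays at $3$ is to observe that $R_a$ needs two vertices from each side (so $D$ plus at most one component for $A_2$'s pair, or the pair might live in $\pmc \setminus N(D) \subseteq \pmc$), while the "$b$/$c$" witness sets, which reach into $A_2$, can be routed through the \emph{same} auxiliary component each time by reusing the minimality choices, leaving a total of three components and two separator-internal vertices; the precise bookkeeping is the part I would grind through last.
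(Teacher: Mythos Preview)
Your approach has a genuine gap, and it is also much more complicated than what is actually needed.

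The core problem is coverage. You choose a component $D_0$ covering a non-edge $uv$, set $S_0 = N(D_0)$, and then run the machinery of Theorem~\ref{thm:sepcov} on the minimal separator $S_0$. But that machinery produces a small set dominating $S_0$, not $\pmc$. The vertices of $\pmc \setminus N(D_0)$ --- which you correctly place inside the second full component $A_2$ --- are never covered by any step of your plan: the separator-covering argument never promises anything about vertices of $A_2$. You note this (``the portion of $\pmc$ lying in $S$'') but never say how the remaining portion gets dominated.

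The second problem is the counting. Even restricting to $S_0$, the witness set $R_a \cup R_{bc} \cup R_{cb} \cup R_{cc}$ from the proof of Theorem~\ref{thm:sepcov} contains up to four vertices of $S_0 \subseteq \pmc$ (one in $R_{bc}$, one in $R_{cb}$, two in $R_{cc}$) and up to nine vertices of $A_2$, which may land in many distinct components of $G-\pmc$ and in several further vertices of $\pmc \setminus N(D_0)$. There is no mechanism that forces these to collapse to at most two $\pmc$-vertices and three components; your ``routed through the same auxiliary component'' is an assertion, not an argument. Also, you have the arithmetic of Theorem~\ref{thm:pmccov} backwards: the bound $68 = 2 + 3\cdot 22$ arises because the paper first proves the present lemma \emph{without} using Theorem~\ref{thm:sepcov}, and only afterwards applies Theorem~\ref{thm:sepcov} to each $N(D)$ for $D \in \mathcal{D}'$.

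The paper's proof is direct and does not touch Section~\ref{sec:sep}. Take a minimal family $\mathcal{D} \subseteq \cc(G-\pmc)$ that covers all non-edges of $\pmc$, pick $D \in \mathcal{D}$ and a non-edge $uv$ covered only by $D$, and (in the main case) pick further components $D_u, D_v \in \mathcal{D}\setminus\{D\}$ with $u \in N(D_u)$, $v \in N(D_v)$. Set $\pmc'=\{u,v\}$ and $\mathcal{D}'=\{D,D_u,D_v\}$. Any uncovered $x\in\pmc$ would yield, via the components covering $xu$ and $xv$, an induced path $y_u - u - \cdots - x - \cdots - v - y_v$ on at least seven vertices. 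That is the whole argument.
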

\begin{proof}
Let $\mathcal{D} \subseteq \cc(G-\pmc)$ be an inclusion-wise minimal set of components
of $G-\pmc$ such that for every nonedge $uv$ in $\pmc$ there exists a component $D \in \mathcal{D}$
that covers $uv$. 

If $\mathcal{D} = \emptyset$, then $\pmc$ is a clique in $G$ and thus
we can put $\pmc' = \{v\}$ and $\mathcal{D}'= \emptyset$ for an arbitrary $v \in \pmc$.
Otherwise, pick any $D \in \mathcal{D}$. By the minimality of $\mathcal{D}$, there exists
a nonedge $uv$ in $\pmc$ that is covered by $D$ and by no other component of $\mathcal{D} \setminus \{D\}$. 

Assume that there is no component $D_v \in \mathcal{D} \setminus \{D\}$ with $v \in N(D_v)$.
Then $\pmc \setminus N(D) \subseteq N(v)$, so $\pmc \subseteq N(v) \cup N(D)$. 
Hence, we can set $\pmc' =\{v\}$ and $\mathcal{D}' = \{D\}$. 
Symmetrically, we are done if there is no component $D_u \in \mathcal{D} \setminus \{D\}$
with $u \in N(D_u)$. 

In the remaining case, pick arbitrary components 
$D_v \in \mathcal{D} \setminus \{D\}$ with $v \in N(D_v)$ and
$D_u \in \mathcal{D} \setminus \{D\}$ with $u \in N(D_u)$.
Since $D$ is the only component of $\mathcal{D}$ that covers $uv$, we have $u \notin N(D_v)$
and $v \notin N(D_u)$; in particular, $D_u \neq D_v$. 
We claim that we can set $\pmc' = \{u,v\}$ and $\mathcal{D}' = \{D,D_u,D_v\}$. That is,
   we claim that
\begin{equation}\label{eq:pmccov}
\pmc \subseteq N[u] \cup N[v] \cup N(D) \cup N(D_u) \cup N(D_v).
\end{equation}
Assume the contrary, and let $x \in \pmc$ be such that $xu \notin E(G)$, $xv \notin E(G)$,
$x \notin N(D)$, $x \notin N(D_u)$, and $x \notin N(D_v)$. 

Since $xu$ is a nonedge of $\pmc$, there exists $D_{xu} \in \mathcal{D}$ that covers $xu$.
Similarly, there exists $D_{xv} \in \mathcal{D}$ that covers $xv$.
By the choice of $x$, we have $D_{xu},D_{xv} \notin \{D,D_u,D_v\}$.
Further, since $D$ is the only component of $\mathcal{D}$ covering $uv$,
$v \notin N(D_{xu})$ and $u \notin N(D_{xv})$; in particular, $D_{xu} \neq D_{xv}$.
See Figure~\ref{fig:pmc}.

Let $y_u$ be an arbitrary neighbor of $u$ in $D_u$, let $y_v$ be an arbitrary neighbor of $v$ in $D_v$, 
let $P_u$ be a shortest path from $u$ to $x$ with all internal vertices in $D_{xu}$, and let $P_v$ be a shortest path from $v$ to $x$ with all internal vertices in $D_{xv}$. 
Then, $y_u - u - P_u - x - P_v - v - y_v$ is an induced path with at least
$7$ vertices, a contradiction. This proves~\eqref{eq:pmccov} and concludes the proof of Lemma~\ref{lem:pmccov2}.
\end{proof}

\begin{figure}[tb]
\begin{center}
\includegraphics{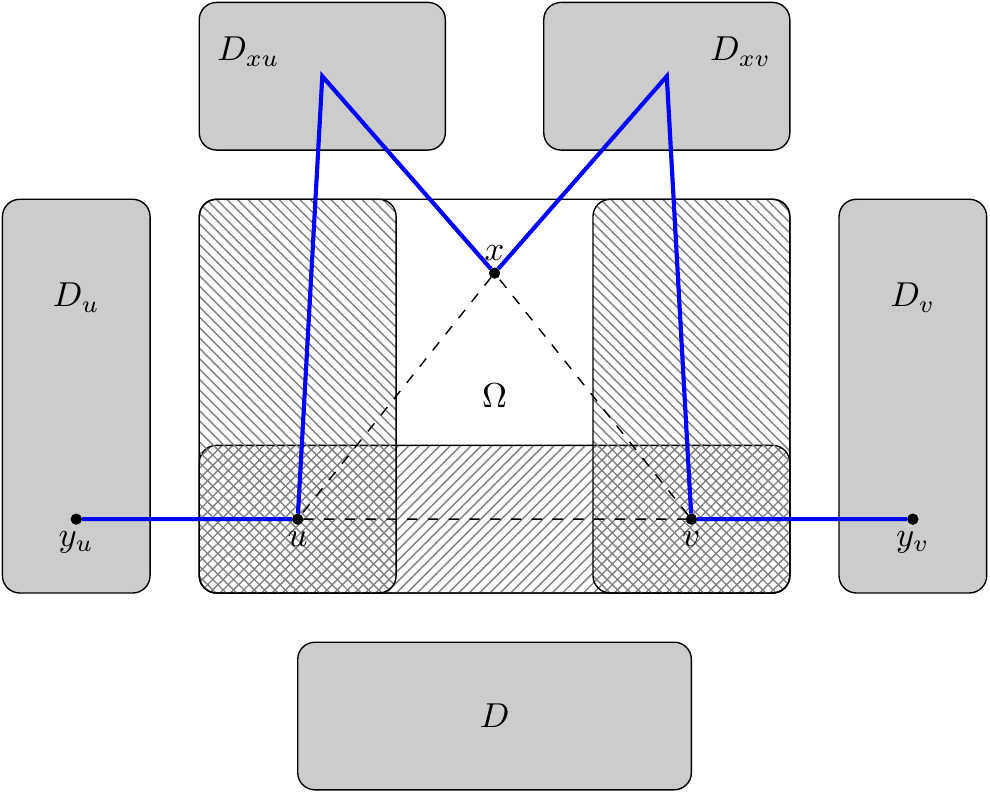}
\caption{A $P_7$ in the proof of Lemma~\ref{lem:pmccov2}. $N(D)$ is marked with north-east strips
  and $N(D_u) \cup N(D_v)$ is marked with north-west strips.
  Note that $N(D_u) \cap N(D_v)$ may be nonempty, which is not depicted on the figure.}\label{fig:pmc}
\end{center}
\end{figure}

\section{Examples}\label{sec:ex}

In this section we discuss two examples showing tightness of the statement of Theorem~\ref{thm:sepcov}: we show that it cannot be generalized to $P_8$-free graphs and that 
a small dominating set of a minimal separator may need to contain elements of the said separator. The examples are modifications of a corresponding example presented
in the conclusions of~\cite{LokshtanovPL18}.

\begin{figure}[tb]
\begin{center}
\includegraphics{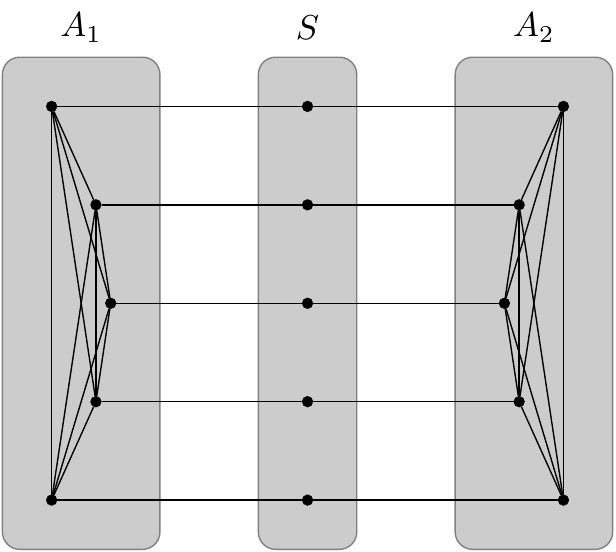}
\caption{The first example for $n=5$. In the second example we turn $S$ into a clique.}\label{fig:ex}
\end{center}
\end{figure}
\paragraph{First example.} Consider the following graph $G$. 
We create three sets of $n$ vertices each, $A_1 = \{a_1^j~|~1 \leq j \leq n\}$,
   $A_2 = \{a_2^j~|~1 \leq j \leq n\}$, and $S = \{s^j~|~1 \leq j \leq n\}$.
We set $V(G) = A_1 \cup A_2 \cup S$. 
For the edge set of $G$, we turn $A_1$ and $A_2$ into cliques
and add edges $s^ja_1^j$ and $s^ja_2^j$, for all $1 \leq j \leq n$. 
This concludes the description of the graph $G$; see Figure~\ref{fig:ex}.
Note that $S$ is a minimal separator in $G$
with $A_1$ and $A_2$ being two full components of $S$.

First, note that for every $v \in V(G)$, $|N[v] \cap S| =1$. Thus, any set dominating
$S$ has to contain at least $n$ vertices. 

Second, note that $G$ is $P_8$-free. To see this, let $P$ be an induced path in $G$.
Since $A_1$ and $A_2$ are cliques, $P$ contains at most two vertices from each $A_i$, $i\in \{1,2\}$,
and these vertices are consecutive on $P$.
Since $S$ is an independent set in $G$, $P$ cannot contain more than one vertex of $S$ in a row.
Hence, $P$ contains at most three vertices of $S$. Consequently $|V(P)| \leq 7$, as desired.
Note that if $n \geq 3$, then there is an induced $P_7$ in $G$, for example
$s^1 - a^1_1 - a^2_1 - s^2 - a^2_2 - a^3_2 - s^3$.

\paragraph{Second example.}
Here, let us modify the graph $G$ from the first example by turning $S$ into a clique.
Still, $S$ is a minimal separator in $G$ with $A_1$ and $A_2$ being two full components of $S$.

First, note that for every $v \in A_1 \cup A_2$, we still have $|N[v] \cap S| =1$.
Thus, any set dominating $S$ that is disjoint with $S$ has to contain at least $n$ vertices.

Second, note that $G$ is $P_7$-free. To see this, observe that $G$ can be partitioned into
three cliques, $A_1$, $A_2$, and $S$, and any induced path in $G$ contains at most two vertices
from each of the cliques. 
Note that if $n \geq 3$, then there is an induced $P_6$ in $G$, for example $a_1^3 - a_1^1 - s^1 - s^2 - a_2^2 - a_2^3$.

\bibliographystyle{abbrv}

\bibliography{references}

\begin{thebibliography}{10}

\bibitem{ACPRS20}
T.~Abrishami, M.~Chudnovsky, M.~Pilipczuk, P.~Rz{\k{a}}{\.{z}}ewski, and P.~D.
  Seymour.
\newblock Induced subgraphs of bounded treewidth and the container method.
\newblock {\em CoRR}, abs/2003.05185, 2020.

\bibitem{Alekseev82}
V.~E. Alekseev.
\newblock The effect of local constraints on the complexity of determination of
  the graph independence number.
\newblock {\em Combinatorial-algebraic methods in applied mathematics}, pages
  3--13, 1982.
\newblock (in Russian).

\bibitem{BacsoLMPTL19}
G.~Bacs{\'{o}}, D.~Lokshtanov, D.~Marx, M.~Pilipczuk, Z.~Tuza, and E.~J. van
  Leeuwen.
\newblock Subexponential-time algorithms for {M}aximum {I}ndependent {S}et in
  ${P}_t$-free and broom-free graphs.
\newblock {\em Algorithmica}, 81(2):421--438, 2019.

\bibitem{BonomoCMSSZ18}
F.~Bonomo, M.~Chudnovsky, P.~Maceli, O.~Schaudt, M.~Stein, and M.~Zhong.
\newblock Three-coloring and list three-coloring of graphs without induced
  paths on seven vertices.
\newblock {\em Combinatorica}, 38(4):779--801, 2018.

\bibitem{BouchitteT01}
V.~Bouchitt{\'{e}} and I.~Todinca.
\newblock Treewidth and {M}inimum {F}ill-in: {G}rouping the minimal separators.
\newblock {\em {SIAM} J. Comput.}, 31(1):212--232, 2001.

\bibitem{BouchitteT02}
V.~Bouchitt{\'{e}} and I.~Todinca.
\newblock Listing all potential maximal cliques of a graph.
\newblock {\em Theor. Comput. Sci.}, 276(1-2):17--32, 2002.

\bibitem{Brause17}
C.~Brause.
\newblock A subexponential-time algorithm for the {M}aximum {I}ndependent {S}et
  problem in ${P}_t$-free graphs.
\newblock {\em Discret. Appl. Math.}, 231:113--118, 2017.

\bibitem{ChudnovskyPPT20}
M.~Chudnovsky, M.~Pilipczuk, M.~Pilipczuk, and S.~Thomass{\'{e}}.
\newblock Quasi-polynomial time approximation schemes for the {M}aximum
  {W}eight {I}ndependent {S}et problem in ${H}$-free graphs.
\newblock In {\em Proceedings of the $31^{\textrm{st}}$ {ACM-SIAM} Symposium on
  Discrete Algorithms, {SODA} 2020}, pages 2260--2278. {SIAM}, 2020.

\bibitem{ChudnovskyS10}
M.~Chudnovsky and P.~D. Seymour.
\newblock The three-in-a-tree problem.
\newblock {\em Combinatorica}, 30(4):387--417, 2010.

\bibitem{SpirklCZ19}
M.~Chudnovsky, S.~Spirkl, and M.~Zhong.
\newblock Four-coloring ${P}_6$-free graphs.
\newblock In {\em Proceedings of the $30^{\textrm{th}}$ Annual {ACM-SIAM}
  Symposium on Discrete Algorithms, {SODA} 2019}, pages 1239--1256. {SIAM},
  2019.

\bibitem{FominTV15}
F.~V. Fomin, I.~Todinca, and Y.~Villanger.
\newblock Large induced subgraphs via triangulations and {CMSO}.
\newblock {\em {SIAM} J. Comput.}, 44(1):54--87, 2015.

\bibitem{GolovachPS14}
P.~A. Golovach, D.~Paulusma, and J.~Song.
\newblock Closing complexity gaps for coloring problems on h-free graphs.
\newblock {\em Inf. Comput.}, 237:204--214, 2014.

\bibitem{GroenlandORSSS19}
C.~Groenland, K.~Okrasa, P.~Rz{\k{a}}{\.{z}}ewski, A.~D. Scott, P.~D. Seymour,
  and S.~Spirkl.
\newblock ${H}$-colouring ${P}_t$-free graphs in subexponential time.
\newblock {\em Discret. Appl. Math.}, 267:184--189, 2019.

\bibitem{GrzesikKPP19}
A.~Grzesik, T.~Klimo{\v s}ov{\'a}, M.~Pilipczuk, and M.~Pilipczuk.
\newblock Polynomial-time algorithm for {M}aximum {W}eight {I}ndependent {S}et
  on ${P}_6$-free graphs.
\newblock In {\em Proceedings of the $30^{\textrm{th}}$ Annual {ACM-SIAM}
  Symposium on Discrete Algorithms, {SODA} 2019}, pages 1257--1271. {SIAM},
  2019.
\newblock ArXiv version at \url{http://arxiv.org/abs/1707.05491}.

\bibitem{HabibP10}
M.~Habib and C.~Paul.
\newblock A survey of the algorithmic aspects of modular decomposition.
\newblock {\em Computer Science Review}, 4(1):41--59, 2010.

\bibitem{LokshtanovPL18}
D.~Lokshtanov, M.~Pilipczuk, and E.~J. van Leeuwen.
\newblock Independence and efficient domination on ${P}_6$-free graphs.
\newblock {\em {ACM} Trans. Algorithms}, 14(1):3:1--3:30, 2018.

\bibitem{LokshtanovVV14}
D.~Lokshtanov, M.~Vatshelle, and Y.~Villanger.
\newblock Independent set in ${P}_5$-free graphs in polynomial time.
\newblock In {\em Proceedings of the $25^{\textrm{th}}$ Annual {ACM-SIAM}
  Symposium on Discrete Algorithms, {SODA} 2014}, pages 570--581. {SIAM}, 2014.

\end{thebibliography}

\end{document}